\documentclass[a4paper,11pt]{article} 

\usepackage{fullpage}
\usepackage{times}
\usepackage{soul}
\usepackage{url}
\usepackage[utf8]{inputenc}
\usepackage[small]{caption}
\usepackage{graphicx}
\usepackage{xcolor}
\usepackage{amsmath}
\usepackage{booktabs}
\usepackage{algorithm}
\usepackage{dsfont}

\usepackage{amsthm}
\usepackage{amssymb}
\usepackage[noend]{algorithmic}
\usepackage{enumitem}
\usepackage{mathtools}

\usepackage{bbm}
\usepackage{multirow}
\usepackage{tikz}
\usetikzlibrary{calc,positioning}

\newtheorem{theorem}{Theorem}[section]
\newtheorem{corollary}[theorem]{Corollary}

\theoremstyle{definition}

\usepackage{natbib}
\usepackage{authblk}
\usepackage{cleveref}

\usepackage{aliascnt}

\newaliascnt{proposition}{theorem}
\newtheorem{proposition}[proposition]{Proposition}
\aliascntresetthe{proposition}
\crefname{proposition}{proposition}{propositions}
\Crefname{proposition}{Proposition}{Propositions}

\newcommand{\cA}{\mathcal{A}}
\newcommand{\cB}{\mathcal{B}}
\newcommand{\cC}{\mathcal{C}}
\newcommand{\cD}{\mathcal{D}}
\newcommand{\cP}{\mathcal{P}}

\allowdisplaybreaks

\title{\bf Reconfiguration in Fair Division Revisited}

\author[1]{Fabian Frank}
\author[2]{Warut Suksompong}

\affil[1]{Technical University of Munich, Germany}
\affil[2]{National University of Singapore, Singapore}

\date{\vspace{-10mm}}

\begin{document}

\maketitle

\begin{abstract}
We revisit reconfiguration in the fair allocation of indivisible goods, where the goal is to transform one fair allocation into another through a sequence of exchanges while preserving fairness at every step.
Our focus is on the hierarchy of \emph{envy-freeness up to $k$ goods (EF$k$)}.
We show that for any fixed~$k$, two EF1 allocations with the same size vector need not admit a reconfiguration path whose intermediate allocations satisfy EF$k$.
This impossibility persists even when the two allocations arise from standard EF1 approaches: the envy cycle elimination algorithm or the maximum Nash welfare solution.
In contrast, we prove that allocations with the same size vector produced by recursively balanced picking sequences, including round-robin, are always connected via a path that maintains EF2.
We also show that deciding whether an EF1 reconfiguration path exists is NP-hard for any fixed number of agents.
Furthermore, we complement these exchange-based results by studying a more permissive model that also allows transfers, establishing additional connectivity guarantees.
\end{abstract}

\section{Introduction}

Fair division refers to the study of how to allocate resources fairly among agents with potentially differing preferences.
In light of its wide applicability---ranging from cake cutting and dispute resolution~\citep{BramsTa96} to rent sharing and household chore allocation~\citep{GoldmanPr14}---it comes as no surprise that the topic has received substantial attention in mathematics, economics, and computer science.

Perhaps the most well-known fairness notion is \emph{envy-freeness}, which stipulates that no agent should envy another agent based on the chosen allocation \citep{Foley67,Varian74}.
However, in the ubiquitous setting of allocating indivisible goods such as artwork, furniture, or sports equipment, an envy-free allocation may fail to exist.
For instance, this occurs when all agents value a single common good more than the remaining goods combined.
As a consequence, researchers have introduced a relaxation called \emph{envy-freeness up to one good (EF1)} \citep{LiptonMaMo04,Budish11}.
An EF1 allocation always exists and can be found, for example, via the \emph{round-robin algorithm}, which lets the agents pick their favorite goods in cyclic order until the goods run out.
In fact, the same guarantee can be obtained by any \emph{recursively balanced picking sequence}, which lets each agent pick once in every round (except possibly the last round, where each agent picks at most once).
More generally, EF1 belongs to the hierarchy of \emph{envy-freeness up to $k$ goods (EF$k$)}, under which any envy can be eliminated by removing at most $k$~goods from the envied agent's bundle.

While much of the fair division literature concerns finding a fair allocation from scratch, recent work has increasingly addressed settings where an initial allocation is already given.
Inspired by the field of combinatorial reconfiguration \citep{Vandenheuvel13,Nishimura18}, \citet{IgarashiKaSu24} initiated the study of reconfiguration in fair division.
In this model, the goal is to transform one fair allocation into another through a sequence of local modifications while maintaining fairness at every intermediate step.
These authors focused on the fairness notion of EF1 and used (pairwise) exchanges as the reconfiguration operation.
Among other results, they showed that even when the initial and target allocations are both EF1, a reconfiguration path whose intermediate allocations all remain EF1 might not exist; however, such a path is guaranteed to exist for two agents with either identical or binary utilities.

In addition to their investigation, \citet[Sec.~5]{IgarashiKaSu24} also left open several important questions.
For example, since a reconfiguration path whose intermediate allocations are all EF1 may not exist, can existence be restored by allowing the intermediate allocations to be EF$k$ for some fixed $k > 1$?
Regarding complexity, while deciding the existence of an EF1 reconfiguration path is PSPACE-complete in general, can the problem be solved in polynomial time if there are only two (or a constant number of) agents?
And what happens if transfers are allowed in addition to exchanges?
In this paper, we provide answers to all of these questions as well as further results.

\subsection{Overview of Results}

We begin in \Cref{sec:existence} by considering reconfiguration paths between EF1 allocations where intermediate allocations are allowed to be EF$k$ for some $k \ge 1$.
We show that for any fixed number of agents $n\ge 2$ and any fixed~$k$, there exist initial and target EF1 allocations that cannot be connected via an EF$k$ reconfiguration path.
This impossibility persists even when both given allocations arise from standard EF1 approaches: the envy cycle elimination algorithm or, when there are at least three agents with identical utilities, the maximum Nash welfare solution.
On the other hand, when the given allocations are produced by recursively balanced picking sequences, we establish the existence of an EF2 reconfiguration path, thereby highlighting the robustness of allocations obtained from this method.
Moreover, if the number of goods is divisible by the number of agents, the two allocations can be connected by a path in which every allocation is also produced by a recursively balanced picking sequence (and therefore satisfies EF1); however, this ceases to hold without the divisibility condition.

Next, in \Cref{sec:complexity}, we investigate the complexity of deciding whether an EF1 reconfiguration path between two EF1 allocations exists, as well as the length of such paths.
We prove that the problem is NP-hard for every fixed number of agents $n\ge 2$, even when both agents receive the same number of goods in the initial allocation (and the target allocation).
We also give a polynomial-time reduction from the \textsc{Subset Sum Reconfiguration} problem to \textsc{EF1 Reconfiguration}---as a consequence, membership in NP for the latter would imply membership in NP for the former and resolve a question left open by \citet{ItoDe14}.
In addition, we demonstrate that even when an EF1 reconfiguration path for two agents exists, it is possible that every shortest such path has length quadratic in the number of goods, significantly exceeding the linear length of a shortest path when the EF1 constraint is ignored.

Finally, in \Cref{sec:transfers}, we examine a more permissive model that allows transfers in addition to exchanges.
For identical utilities, we establish full EF1 connectivity for any number of agents, thereby generalizing prior results of \citet[App.~A]{IgarashiKaSu24} which hold for either identical binary utilities or two agents with identical utilities.
For binary utilities, however, connectivity can still fail whenever there are at least five agents.
Nevertheless, we prove that connectivity can be recovered if we relax the intermediate fairness requirement to EF2---in fact, transfers alone already suffice for this guarantee.
We also show that for allocations produced by recursively balanced picking sequences, permitting transfers allows us to remove the divisibility restriction in the exchange-only setting: any two such allocations can be connected by allocations with the same property.

\subsection{Further Related Work}

As mentioned earlier, \citet{IgarashiKaSu24} initiated the study of reconfiguration in fair division, focusing primarily on EF1 and exchange operations while also considering a variant with transfers in the appendix.
\citet{Eyjolfsson26} followed up by allowing multiple goods to change owners in each operation and investigating envy-freeness in addition to EF1.
\citet{ChandramouleeswaranNiRa25} examined a model where, starting from a ``near-EF1'' allocation, the goal is to transfer goods in order to reach an EF1 allocation while traversing only near-EF1 allocations.
\citet{BentertBrDe25}, \citet{BredereckDeIn26}, \citet{YuenIgKa26}, \citet{BredereckKaLu23}, and \citet{DornDeSc21} along with \citet{BoehmerBrHe24} considered ``reformation'' models where an initial unfair allocation can be transformed into a fair allocation by adding, transferring, exchanging, sharing, and deleting goods, respectively.
Unlike in reconfiguration, in the model of \citet{ChandramouleeswaranNiRa25} as well as the reformation models, the objective is to reach any fair allocation rather than a given one.
Other related models include \emph{online fair division} \citep{AleksandrovWa20} and \emph{temporal fair division} \citep{CooksonEbSh25,ElkindLaLa25}, in which goods arrive one at a time---in an unknown order and a known order, respectively---and must be allocated immediately upon their arrival.
In contrast to reconfiguration, there is no initial allocation in these models.

Beyond fair division, reconfiguration has been studied for a range of combinatorial problems including Boolean satisfiability \citep{ItoDeHa11}, subset sum \citep{ItoDe14}, and graph coloring \citep{JohnsonKrKr16}.
More recently, within social choice, \citet{DongFrPe26} examined committee voting from a reconfiguration perspective, where each operation allows a committee to exchange one of its members with a non-member.

\section{Preliminaries}
\label{sec:prelims}

Let $N = \{1,\dots,n\}$ be a set of $n\ge 2$ agents, and $M$ be a set of $m\ge 1$ goods.
A \emph{bundle} refers to a (possibly empty) set of goods.
Each agent $i\in N$ has a utility function $u_i : 2^M \to \mathbb{R}_{\ge 0}$ that maps bundles to non-negative real numbers.
We write $u_i(g)$ instead of $u_i(\{g\})$ for a single good $g\in M$, and assume that the utility functions are additive, i.e., $u_i(M') = \sum_{g\in M'}u_i(g)$ for all $i\in N$ and $M'\subseteq M$.
The utility functions are called \emph{identical} if $u_i = u_j$ for all $i,j\in N$---we shall use $u$ to denote the common utility function in this case---and \emph{binary} if $u_i(g)\in\{0,1\}$ for all $i\in N$ and $g\in M$.
An \emph{instance} consists of $N$, $M$, and $(u_i)_{i\in N}$.

An \emph{allocation} $\cA = (A_1,\dots,A_n)$ is an ordered partition of~$M$ into $n$~bundles such that each bundle~$A_i$ is allocated to agent $i\in N$.
The \emph{size vector} of an allocation~$\cA$ is the vector $\vec{s} = (s_1,\dots,s_n)$ such that $s_i = |A_i|$ for each $i\in N$.
We call an allocation \emph{exactly-balanced}\footnote{The typical definition of \emph{balanced} allows agents to receive bundles of sizes differing by at most $1$ (see, e.g., \citep[Sec.~4]{Suksompong21}).} if $s_1 = \dots = s_n$.
An allocation~$\cA$ is \emph{envy-free} if for every pair of agents $i,j\in N$, it holds that $u_i(A_i) \ge u_i(A_j)$.
For a positive integer~$k$, an allocation~$\cA$ satisfies \emph{envy-freeness up to $k$ goods (EF$k$)} if for every pair of agents $i,j\in N$, there exists $B\subseteq A_j$ with $|B| \le k$ such that $u_i(A_i) \ge u_i(A_j\setminus B)$.
Two allocations $\cA$ and $\cB$ can reach each other via an exchange if there exist distinct agents $i,j\in N$ and goods $g\in A_i$ and $g'\in A_j$ such that $B_i = (A_i\cup\{g'\})\setminus\{g\}$, $B_j = (A_j\cup\{g\})\setminus\{g'\}$, and $B_\ell = A_\ell$ for all $\ell\in N\setminus\{i,j\}$.
An \emph{(exchange) reconfiguration path} between allocations $\cA$ and $\cB$ is a sequence of allocations $(\cA^0,\cA^1,\dots,\cA^T)$ for some $T\ge 0$ such that $\cA^0 = \cA$, $\cA^T = \cB$, and $\cA^{t-1}$ and $\cA^t$ can reach each other via an exchange for each $t \in \{1,\dots,T\}$.
We call $\cA$ the \emph{initial allocation} of the path and $\cB$ the \emph{target allocation}.
For a positive integer~$k$, if all allocations on a reconfiguration path satisfy EF$k$, we call the path an \emph{EF$k$ reconfiguration path}.
In \Cref{sec:transfers}, we allow transfers in addition to exchanges; we describe the alternative model in more detail at the beginning of that section.

\section{Reconfiguration Path Existence}
\label{sec:existence}

In this section, we consider the existence of EF$k$ reconfiguration paths when the initial and target allocations satisfy EF1 or are produced by established EF1 algorithms.

\citet{IgarashiKaSu24} proved that for any number of agents, an EF1 reconfiguration path between two given EF1 allocations with the same size vector may fail to exist.
Their proof relies on delicate constructions tailored to the requirement that intermediate allocations remain EF1.
We strengthen this result by showing that non-existence persists even when the intermediate allocations are allowed to satisfy EF$k$ for any fixed $k > 1$.

\begin{theorem}
\label{thm:general-nonexistence}
For any fixed $n \ge 2$ and $k \ge 1$, there exists an instance with $n$ agents together with initial and target EF1 allocations with the same size vector such that no EF$k$ reconfiguration path exists.    
\end{theorem}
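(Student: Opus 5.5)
The plan is to give an explicit family of instances. The core is a two-agent gadget whose size depends on $k$. The general case $n\ge 2$ then follows by padding with extra agents.

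\textbf{Extension to $n$ agents.} For agents $3,\dots,n$ I would add, for each such agent $\ell$, a private block of goods. Agent $\ell$ values its own block highly and gives value $0$ to every other good, and agents $1,2$ give value $0$ to every private block. The initial and target allocations give each block to its owner. Agents $\ell\ge 3$ are then envy-free in both allocations, and agents $1,2$ never envy them, so the EF1 property of the endpoints is inherited from the gadget. The delicate step is to make sure a path cannot escape the gadget's obstruction by routing goods through agents $\ell\ge 3$. I would handle this by giving each extra agent exactly as many private goods as it values, with these values so large that agent $\ell$ violates EF$k$ whenever it holds fewer than all of them; pick block size and values so this holds. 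Since exchanges preserve bundle sizes, agent $\ell$ can only hold a gadget good by giving up a private good. So every EF$k$ path keeps agents $\ell\ge 3$ fixed, and any such path projects to an EF$k$ path of the two-agent gadget.

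\textbf{The two-agent gadget.} Take disjoint sets $X$ and $Y$ of $s$ goods each, with $s$ large relative to $k$ (say $s = 4k+4$). The initial allocation is $(X,Y)$ and the target is $(Y,X)$, so the size vectors agree. Every path must perform at least $s$ exchanges, and a potential argument tracks how many goods of $X$ have crossed over. The valuations should create two \emph{timing constraints on the same critical crossing}. For instance, each agent has one special good ($x^*\in X$ for agent~2 and $y^*\in Y$ for agent~1) worth about half its total, plus unit or zero goods. The two constraints I want are:
\begin{itemize}
\item agent~1 violates EF$k$ unless a particular event happens before roughly $s/2-k$ ordinary crossings;
\item agent~2 violates EF$k$ unless that same event happens after roughly $s/2+k$ crossings.
\end{itemize}

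\textbf{Main obstacle.} The hard part is designing the valuations so that the two requirements genuinely conflict. A naive symmetric design fails. If each agent merely wants its big good early, a single first exchange $x^*\leftrightarrow y^*$ satisfies both agents, and the rest of the path is harmless. So the valuations must be asymmetric: a good that one agent needs to acquire early must be one whose early departure the other agent cannot afford. I would first try making $y^*$ valuable to both agents. Agent~2 then cannot release $y^*$ until it has received at least $s/2+k$ unit goods from $X$, while agent~1 cannot give away more than about $s/2-k$ of its unit goods before receiving $y^*$. Because exchanges swap one good for one good, counting the goods that have crossed gives a contradiction.

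\textbf{Verification.} For the chosen valuations, I would check three things:
\begin{itemize}
\item the initial and target allocations are EF1, by removing the special good from the envied bundle;
\item along any exchange sequence, the first time $y^*$ moves, the count of crossed unit goods falls in the forbidden window for one of the two agents;
\item that agent's envy then exceeds the total value of its $k$ most valuable goods in the other bundle.
\end{itemize}
The last check reduces to an inequality in $s$ and $k$, which the choice $s=\Theta(k)$ is meant to satisfy.
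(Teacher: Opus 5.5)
Your architecture matches the paper's: a two-agent gadget in which one critical crossing is squeezed between an EF$k$ constraint just before it and another just after it, followed by padding. However, the gadget itself, which is the entire content of the theorem, is never specified. Worse, the instantiation in your obstacle paragraph does not produce the thresholds you claim.

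Take ordinary $X$-goods worth $1$ and ordinary $Y$-goods worth $0$ to both agents (treat $x^*$ as ordinary), with $y^*$ worth about $s$ to both. Let $c$ and $c'\le c+1$ be the numbers of $X$-goods agent~$2$ holds just before and just after the first move of $y^*$. In both relevant comparisons, $y^*$ lies only in the envied bundle, where it is among the $k$ removed goods, so its value cannot shift either threshold. Agent~$1$'s EF$k$ before the move gives only $2c\le s+k-1$, and agent~$2$'s EF$k$ after the move gives only $2c'\ge s-k+1$. Both thresholds sit at $s/2$, and the EF$k$ slack widens the window instead of closing it. Your $s/2+k$ and $s/2-k$ have the signs reversed, so the critical-event argument yields no contradiction for any $s$.

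The missing idea is a built-in asymmetry, governed by a free parameter $r$, that separates the two thresholds by $\Theta(r)$ so that it beats the $O(k)$ slack. The paper gets this from unequal bundle sizes plus a cushion good $w$:
\begin{itemize}
\item $X$: $5r-3$ goods worth $0$ to agent~$1$ and $1$ to agent~$2$;
\item $Y$: $3r$ goods worth $1$ and $2$;
\item $z$: worth $2r+1$ and $r+1$;
\item $w$: worth $r$ and $1$.
\end{itemize}
The initial allocation is $(X\cup\{z,w\},Y)$ and the target is $(X_1\cup Y_1\cup\{w\},X_2\cup Y_2\cup\{z\})$, with $|X_1|=4r-2$ and $|Y_1|=r$. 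Let $t$ and $t'$ be the numbers of $Y$-goods agent~$1$ holds just before and just after the first move of $z$. Agent~$2$'s EF$k$ before forces $2t\le r+2k$, and agent~$1$'s EF$k$ after forces $2t'\ge 2r-k+1$. These contradict $t'\le t+1$ when $r=3k+2$.

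The padding step also fails as written. Suppose agent~$\ell\ge 3$ values only its own block and gives $0$ to every other good. After the first exchange in which it gives away a private good, the only good it values outside its own bundle is that single good, which is removed under EF$1$ and hence under EF$k$. So agent~$\ell$ can never be the violator, and no choice of block size or values makes your claim true. The extra agents then become reservoirs through which agents~$1$ and~$2$ can effectively perform transfers. The projection to a two-agent exchange path is therefore unjustified, since the gadget argument relies on the size vector staying fixed.

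The fix, as in the paper, is to make each extra agent value the gadget goods positively. Give agent~$\ell$ a single good $h_\ell$ worth $m_0$ (the number of gadget goods), and let it value every gadget good at $1$. The endpoints remain EF1. Meanwhile, the first exchange involving any extra agent drops that agent's utility to at most $1$, even though it values some original agent's bundle, minus $k$ goods, at least $3r-k>1$.
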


\begin{proof}
We start by presenting an instance with $n = 2$ agents and later extend it to arbitrary $n$.
Fix $k \ge 1$ and let $r = 3k+2$.

Consider an instance with two agents where the set of goods consists of $X = \{x_1,\dots,x_{5r-3}\}$, $Y = \{y_1,\dots,y_{3r}\}$, and two special goods $z,w$; note that the total number of goods is $8r-1 = 24k+15$.
The agents' utilities for the goods are as follows.
\begin{center} 
    \begin{tabular}{c|cccc}
        & $x\in X$ & $y\in Y$ & $z$ & $w$\\
        \hline
        $u_1$ & $0$ & $1$ & $2r+1$ & $r$ \\
        $u_2$ & $1$ & $2$ & $r+1$ & $1$
    \end{tabular}
\end{center}
Let the initial allocation be $\cA = (A_1, A_2) = (X\cup \{z,w\}, Y)$, which has size vector $(5r-1, 3r)$.
For the target allocation, let $X_1\subseteq X$ and $Y_1\subseteq Y$ be such that $|X_1| = 4r-2$ and $|Y_1| = r$, and let $X_2 = X\setminus X_1$ and $Y_2 = Y\setminus Y_1$.
We have $|X_2| = r-1$ and $|Y_2| = 2r$.
Let the target allocation be $\cB = (B_1, B_2) = (X_1\cup Y_1\cup\{w\}, X_2\cup Y_2\cup \{z\})$, which also has size vector $(5r-1, 3r)$.

We first show that both $\cA$ and $\cB$ are EF1.
For $\cA$, we have $u_1(A_1) = (5r-3)\cdot 0 + (2r+1) + r = 3r+1$ and $u_1(A_2) = (3r)\cdot 1 = 3r$, so agent~$1$ does not envy agent~$2$.
Also, $u_2(A_2) = (3r)\cdot 2 = 6r$ and $u_2(A_1) = (5r-3)\cdot 1 + (r+1) + 1 = 6r-1$, so agent~$2$ does not envy agent~$1$.
This means that $\cA$ is envy-free and therefore EF1.
For $\cB$, we have $u_1(B_1) = (4r-2)\cdot 0 + r\cdot 1 + r$ and $u_1(B_2\setminus\{z\}) = u_1(X_2\cup Y_2) = (r-1)\cdot 0 + (2r)\cdot 1 = 2r$, so agent~$1$ does not envy agent~$2$ upon removing~$z$.
Also, $u_2(B_2) = (r-1)\cdot 1 + (2r)\cdot 2 + (r+1) = 6r$ and $u_2(B_1) = (4r-2)\cdot 1 + r\cdot 2 + 1 = 6r-1$, so agent~$2$ does not envy agent~$1$.
Therefore, $\cB$ is EF1 as well.

Assume for contradiction that there exists an EF$k$ reconfiguration path between $\cA$ and $\cB$.
Consider the first exchange that moves~$z$ from agent~$1$ to agent~$2$; such an exchange exists since $z\in A_1$ and $z\in B_2$. 
Let $t$ be the number of goods from~$Y$ held by agent~$1$ immediately before this exchange.
Since the size vector is always $(5r-1, 3r)$, at this point, agent~$1$'s bundle contains $z$ along with $t$~goods from~$Y$ and $5r-t-2$ goods from $X\cup\{w\}$, and so agent~$2$'s bundle contains $3r-t$ goods from~$Y$ along with $t$ goods from $X\cup\{w\}$.
Agent~$2$'s utility for her own bundle is therefore $(3r-t)\cdot 2 + t\cdot 1 = 6r-t$, whereas her utility for agent~$1$'s bundle is $(r+1) + t\cdot 2 + (5r-t-2)\cdot 1 = 6r+t-1$.
We claim that $2t \le r + 2k$.
This holds trivially if $t \le k$, so suppose that $t \ge k+1$.
Note that the $k$ most valuable goods in agent~$1$'s bundle for agent~$2$ are $z$ together with $k-1$ goods from~$Y$, and their total utility for agent~$2$ is $(r+1) + (k-1)\cdot 2 = r+2k-1$.
After removing them, agent~$2$'s utility for agent~$1$'s bundle is $(6r+t-1) - (r+2k-1) = 5r + t - 2k$.
Since this allocation must be EF$k$, it holds that $6r-t \ge 5r+t-2k$, or equivalently, $2t \le r + 2k$.
Hence, it always holds that $2t \le r + 2k$.

Next, let $t'$ be the number of goods from~$Y$ held by agent~$1$ immediately \emph{after} the first exchange that moves $z$ from agent~$1$ to agent~$2$.
We have $t' \le t + 1$, and so 
\begin{align}
\label{eq:tprime-upper}
2t' \le 2t + 2 \le r+2k+2 = 5k+4.
\end{align}
Since $z$ has moved to agent~$2$, agent~$1$'s utility for her own bundle is at most $t'\cdot 1 + r = t' + r$.
On the other hand, agent~$2$'s bundle contains $3r-t'$ goods from $Y$ along with $z$ (and possibly goods from $X\cup\{w\}$).
After removing the $k$ most valuable goods (for agent~$1$), agent~$1$'s utility for this bundle remains at least $(3r-t' - (k-1))\cdot 1 = 3r-t'-k + 1$.
Since this allocation must be EF$k$, it holds that $t' + r \ge 3r-t'-k+1$, or equivalently, $2t' \ge 2r-k+1 = 5k+5$, contradicting~\eqref{eq:tprime-upper}.
It follows that no EF$k$ reconfiguration path exists between $\cA$ and $\cB$.

\medskip

We now describe how to extend this instance to arbitrary $n > 2$.
Let $m_0 = 8r-1 = 24k+15$ be the number of goods in the original instance.
For each additional agent $i\in\{3,\dots,n\}$, we introduce one new good $h_i$.
Both original agents have utility~$0$ for all goods~$h_i$.
Each new agent~$i$ has utility~$1$ for each original good, utility $m_0$ for~$h_i$, and utility~$0$ for all remaining goods $h_j$ with $j \ne i$.
We extend the initial allocation~$\cA$ and the target allocation~$\cB$ by giving each new good~$h_i$ to agent~$i$.
Note that the size vector of $\cA$ and $\cB$ is $(5r-1, 3r, 1, \dots, 1)$, and the total number of goods is $m_0 + (n-2) = 24k + n + 13$.
Moreover, both $\cA$ and $\cB$ are EF1.
Indeed, agents~$1$ and $2$ are EF1 toward each other in the original instance and have utility~$0$ for the additional goods, while each new agent has utility~$m_0$ for her own bundle and at most~$m_0$ for any other bundle.

It remains to show that, again, there does not exist an EF$k$ reconfiguration path between $\cA$ and $\cB$.
Assume for contradiction that such a path exists.
We first claim that no new agent can participate in an exchange.
Otherwise, consider the first exchange involving a new agent.
Before this exchange, every new agent still owns her high-value good, and all original goods remain divided between the two original agents.
If two new agents exchange goods with each other, each of their utilities drops to~$0$, and their utility for an original agent's bundle after removing $k$ goods is at least $3r - k = 8k+6 > 0$, so the allocation is not EF$k$.
Else, suppose that a new agent exchanges goods with an original agent~$i$.
In this case, the new agent's utility drops to~$1$, and her utility for agent~$i$'s bundle after removing the $k$ most valuable goods is at least $(3r-1) - (k-1) = 3r-k = 8k+6 > 1$, so the allocation again violates EF$k$.
Thus, no new agent can be involved in an exchange, that is, every exchange must be between the two original agents.
However, such an EF$k$ reconfiguration path would give rise to an EF$k$ reconfiguration path in the original $n = 2$ instance, a contradiction.
It follows that no EF$k$ reconfiguration path between $\cA$ and $\cB$ exists, completing the proof.
\end{proof}

In light of \Cref{thm:general-nonexistence}, a natural question is whether existence can be recovered if the initial and target allocations are not arbitrary EF1 allocations, but instead arise from standard EF1 algorithms.
We shall investigate this question for three well-known approaches to obtaining EF1 allocations.

The first approach is based on \emph{recursively balanced picking sequences}.
In a picking sequence, agents take turns selecting their favorite good among the remaining goods (with ties broken arbitrarily) according to a predetermined order.
Such a sequence can be divided into ``rounds'' of $n$~picks each, except possibly the final round, which contains fewer than $n$~picks if $m$ is not divisible by~$n$.
A~picking sequence is called recursively balanced if each agent picks at most once in every round; these picking sequences are precisely the ones guaranteeing EF1 \citep[Prop.~2.1]{CelineSuYu26}.
For brevity, we call an allocation that can be produced by some recursively balanced picking sequence a \emph{recursively balanced allocation}.
Interestingly, we show that if the initial and target allocations are recursively balanced (and have the same size vector), then there exists an EF2 reconfiguration path.
This positive result underscores the robustness of recursively balanced allocations from a reconfiguration perspective.

\begin{theorem}
\label{thm:recursively-balanced-general}
For any instance, if the initial and target allocations are recursively balanced allocations and have the same size vector, then there exists an EF2 reconfiguration path.
\end{theorem}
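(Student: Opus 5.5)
The plan is to describe every recursively balanced allocation by a \emph{level structure}, and to move from $\cA$ to $\cB$ through allocations that keep an approximate version of it. Let $q=\lfloor m/n\rfloor$. A recursively balanced allocation comes with a labelling of each agent's goods by the rounds in which they were picked: agent $i$ holds one good $a_i^\ell$ at each level $\ell\le q$, plus one good at level $q+1$ if $s_i=q+1$. When $i$ picks $a_i^\ell$, every good picked later is still available. Hence $u_i(a_i^\ell)\ge u_i(g)$ for every good $g$ that is picked after $a_i^\ell$, and in particular for every good at level $\ge \ell+1$.

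Recall the standard EF1 argument. Remove from $A_j$ its level-$1$ good. Then match $j$'s level-$(\ell+1)$ good with $i$'s level-$\ell$ good. This gives $u_i(A_i)\ge u_i(A_j\setminus\{a_j^1\})$. The argument only uses the following \emph{level property}: each bundle is indexed by levels, and $u_i(\text{$i$'s level-}\ell\text{ good})\ge u_i(\text{any good at level}\ge \ell+1)$. Because $\cA$ and $\cB$ have the same size vector, the same agents hold a level-$(q+1)$ good in both allocations. So the level indexings of $\cA$ and $\cB$ have the same shape.

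The path will be built level by level, for $\ell=1,\dots,q+1$. At stage $\ell$ the invariant is that every agent already holds $b_i^1,\dots,b_i^{\ell-1}$, and the remaining goods are held by the agents who need them at levels $\ell,\dots$. To process a level, we look at the goods $b_i^\ell$ that are misplaced. Such a $b_i^\ell$ currently sits in some agent $j$'s bundle. We exchange it against a good of $i$ that is not yet fixed, preferring $i$'s currently top-ranked unfixed good. Misplacements form cycles of the map ``where $b_i^\ell$ currently lies'', and we resolve these cycles by consecutive exchanges.

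The fairness analysis relies on one key comparison. The prefix levels $<\ell$ satisfy the level property, because they are $\cB$'s. The unfixed goods (levels $\ge\ell$) of $\cB$ and of the remainder of $\cA$ are all goods that every agent values at most her own $b_i^{\ell-1}$. Concretely, in $\cB$ they are picked after round $\ell-1$. We then need to bound the envy of $i$ toward $j$ arising from the unfixed parts. The claim is that the unfixed parts still satisfy the level property up to at most one misplaced good per bundle. Each such misplacement costs at most one additional good on top of the EF1 slack, which gives EF2.

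The main obstacle is that the unfixed goods of $\cA$ were labelled by $\cA$'s rounds, not $\cB$'s. So $i$'s leftover goods need not dominate $j$'s leftover goods level-wise, and the mixing of $\cA$- and $\cB$-levels is exactly where the argument can fail. My first attempt at handling this is to re-sort each agent's leftover goods by her own utility and re-index them as levels. I would then prove an exchange lemma: swapping $b_i^\ell$ into $i$ breaks the level-wise domination in at most one position per pair of agents. To show this, I would compare to a $\cB$-pick, using that $b_i^\ell$ was $i$'s favorite among the goods remaining at her level-$\ell$ turn in $\cB$. If that fails, the fallback is to first reconfigure $\cA$ into the round-robin-like allocation with the same picking order restricted to levels. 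Both endpoints would then be connected to a canonical allocation by swaps of two goods at the same level, which preserve EF2 by the level argument.
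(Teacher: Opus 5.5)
\textbf{The gap: the key invariant is never established.} Your preliminary facts are correct: the level property and the EF1 argument it gives, equal size vectors giving the same level shape, and the bound $u_i(g)\le u_i(b_i^{\ell-1})$ for every good $g$ at $\cB$-level $\ge\ell$. But the proposal stops where the difficulty begins. You claim that after your exchanges the unfixed parts satisfy the level property up to one misplaced good per bundle. This is never argued, and your procedure gives no reason for it to hold.

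Fixing $\cB$'s level-$\ell$ goods by direct exchanges leaves every other good wherever $\cA$ and earlier exchanges put it. Yet different recursively balanced sequences put different goods into each round. In particular, the map sending $i$ to the current holder of $b_i^\ell$ need not be a permutation. For example, suppose agents $1,2$ share a top good $f$, $\cA$'s sequence starts $(2,1),(2,1)$, and $\cB$'s starts $(1,2)$. Then agent~$2$'s $\cA$-bundle can contain both $b_1^1=f$ and $b_2^1$. So there are no cycles to resolve, and agents must hand over unfixed goods chosen by an ad hoc rule. Each hand-over can break a comparison among the leftovers, and this can recur at every stage, so nothing keeps the damage at one good per bundle.

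Your only quantitative control pays for a single good. Suppose all fixed goods of $i$ and $j$ are worth the same $X$ to $i$, $i$'s leftover is worthless to her, and $j$'s leftover contains three goods worth $X$ to $i$. Then EF2 fails for $i$ toward $j$, although every bound you have is satisfied, and your argument has no mechanism excluding this.

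The fallback does not help either. Swaps of two goods at the same level preserve which goods lie in which round, so they cannot connect allocations whose rounds contain different goods; the example above already has different round-$1$ sets. A single such swap from a genuinely level-structured allocation is EF2, but a sequence of them accumulates broken comparisons that the level argument no longer controls.

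\textbf{The missing idea.} You need to stay inside a class of allocations whose fairness is certified by a picking process. The paper's intermediate allocations are outputs of a recursively balanced sequence with at most one forced pick, all other picks greedy. Every pick after the forced one is recomputed, so only the forced agent's comparison at one round breaks, which yields EF2.

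Connectivity within this class is proved by induction on the suffix length. For a suffix $(c_1,\dots,c_\ell)$, let $\cA(g)$ be the allocation in which $c_1$ is forced to take $g$ and all other picks are greedy (with tie-breaking fixed). Let $x,y$ be $c_2$'s two favorite remaining goods. For $g\neq x$, agent $c_2$ takes $x$ and one is left with a suffix of length $\ell-1$. Meanwhile $\cA(x)$ and $\cA(y)$ differ by exchanging $x$ and $y$, or coincide if $c_1=c_2$. This connects the outputs of sequences differing by an adjacent transposition, and similarly of tie-breaking orders differing by an adjacent swap. Equal size vectors then let any two recursively balanced sequences be joined by such transpositions.

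Your level-by-level plan can be salvaged along these lines. Run the first $\ell$ rounds of $\cB$'s sequence, then the remaining rounds of $\cA$'s sequence, greedily. This hybrid is recursively balanced (its final round has the right agents because the size vectors agree), hence EF1. Consecutive hybrids differ only in the order within one round. But passing between them means reordering that round by adjacent transpositions, each of which can change every later pick. Handling that cascade is exactly the forced-pick induction, and it is the ingredient your proposal lacks.
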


\begin{proof}
Fix an instance.
First, we claim that any allocation produced by a recursively balanced picking sequence~$\pi$ with at most one ``forced pick''---we refer to such an allocation as an ``almost recursively balanced allocation''---satisfies EF2.
In a forced pick, the picking agent is forced to pick a certain good which may not be her most valuable good among the remaining goods.
To prove this claim, consider any two agents $i$ and $j$, and let $t$ be the number of goods that $j$ picks.
Since $\pi$ is recursively balanced, for each $\ell\in\{2,\dots,t\}$, agent~$i$'s $(\ell-1)$st pick comes before agent~$j$'s $\ell$th pick. 
Hence, agent~$i$ values her good from the former pick at least as much as agent~$j$'s good from the latter pick; the only possible exception is when $i$'s former pick is forced.
This means that if we remove the good that $j$ picks in the round after $i$'s forced pick (if this good exists) as well as the good that $j$ picks in the first round, $i$ does not envy~$j$.
Since this holds for every pair of agents $i$ and $j$, the allocation satisfies EF2.

Next, for each agent who has the same utility for multiple goods, we temporarily fix a tie-breaking order for the agent among such goods---this makes the output of any given picking sequence unique.
Consider any suffix $(c_1,c_2,\dots,c_\ell)$ of a picking sequence~$\pi$, and let $M_0\subseteq M$ be the set of $\ell$ remaining goods.
For each $g\in M_0$, let $\cA(g)$ be the allocation produced by forcing agent~$c_1$ to pick~$g$ in the first pick of this suffix, while letting agents pick their most preferred good (subject to the chosen tie-breaking) in all other picks (including those before the suffix).

We claim that for any goods $g, g'\in M_0$, there exists a reconfiguration path between $\cA(g)$ and $\cA(g')$ such that every intermediate allocation corresponds to $\cA(h)$ for some $h\in M_0$.
We prove this claim by induction on~$\ell$, the length of the suffix.
For the base case $\ell = 1$, the claim holds trivially.
For the inductive step, suppose that the claim holds when the suffix has length $\ell-1$, and consider a suffix of length~$\ell \ge 2$.
Let agent~$c_2$'s two most preferred goods in~$M_0$ be $x$ and $y$, respectively.
Whenever $g \ne x$, after agent~$c_1$ is forced to take~$g$, agent~$c_2$ takes~$x$.
If we remove $x$ along with agent~$c_2$'s pick in~$\pi$, the suffix has length $\ell - 1$.
For $g\in M_0\setminus\{x\}$, define $\cB(g)$ for the reduced picking sequence in the same way as $\cA(g)$ for the original sequence.
By the inductive hypothesis, for any $g,g'\in M_0\setminus\{x\}$, there exists a reconfiguration path between $\cB(g)$ and $\cB(g')$ such that every intermediate allocation corresponds to $\cB(h)$ for some $h\in M_0\setminus\{x\}$.
Hence, the same is true for $\cA(g)$ and $\cA(g')$ for all $g,g'\in M_0\setminus\{x\}$.
Now, consider $\cA(x)$ and $\cA(y)$.
In $\cA(x)$, agent~$c_1$ picks~$x$ and agent~$c_2$ picks~$y$, whereas in $\cA(y)$, agent~$c_1$ picks~$y$ and agent~$c_2$ picks~$x$; the subsequent picks are the same in the two allocations.
Therefore, $\cA(x)$ and $\cA(y)$ can reach each other via an exchange if agents $c_1$ and $c_2$ are different, and coincide if agents $c_1$ and $c_2$ are the same.
Since $y \in M_0\setminus\{x\}$, it follows that for any $g,g'\in M_0$, there exists a reconfiguration path between $\cA(g)$ and $\cA(g')$ such that every intermediate allocation corresponds to $\cA(h)$ for some $h\in M_0$, completing the induction.

For our next step, with the agents' tie-breaking orders still fixed, we consider any two recursively balanced picking sequences $\pi,\pi'$ that differ only by swapping two adjacent agents $i,j$.
Assume without loss of generality that $i$ appears before $j$ in $\pi$, and $j$ appears before $i$ in $\pi'$.
Consider the suffix of~$\pi$ beginning with agent~$i$'s pick, and define $\cA(g)$ as before for each remaining good~$g$ with respect to this suffix.
In a normal run of~$\pi'$, suppose that agent~$j$ chooses $h$ and agent~$i$ chooses~$h'$.
In the run of~$\pi$, if we force $i$ to choose $h'$, then $h$ remains $j$'s most preferred good among the remaining goods, and so $j$ chooses $h$.
Therefore, the resulting allocation corresponds to $\cA(h')$.
Let $\widehat{h}$ be the good that $i$ picks in a normal run of~$\pi$.
By our earlier claims, there exists a reconfiguration path between $\cA(h')$ and $\cA(\widehat{h})$ such that every intermediate allocation corresponds to $\cA(h'')$ for some $h''$ and is therefore EF2.
This implies that the allocations produced by $\pi$ and $\pi'$ are connected by an EF2 reconfiguration path.

We now lift the assumption of fixed tie-breaking orders.
Suppose that an agent~$i$ reverses the order of two adjacent goods in her tie-breaking order, from $g,g'$ to $g',g$.
Consider the run of an arbitrary recursively balanced picking sequence.
If the two tie-breaking orders yield different allocations, consider the first turn at which they differ.
In this turn, agent~$i$ picks $g$ for one tie-breaking order and $g'$ for the other.
Viewing the latter choice as forcing $i$ to pick~$g'$ despite the order being the former, we again get that these two allocations are connected by an EF2 reconfiguration path.
Since any two tie-breaking orders can be transformed into each other by repeatedly swapping the order of adjacent goods, the resulting allocations are connected by an EF2 reconfiguration path.
Applying this argument across all agents, it follows that for any two recursively balanced picking sequences that differ only by swapping two adjacent agents, the resulting allocations are connected by an EF2 reconfiguration path.
In particular, the two picking sequences need not rely on the same tie-breaking order.

Finally, consider any two recursively balanced allocations with the same size vector resulting from picking sequences $\pi$ and $\pi'$.
Since both allocations have the same size vector, the final (possibly incomplete) round of $\pi$ and $\pi'$ consists of the same set of agents.
Therefore, $\pi$ can be transformed into~$\pi'$ by repeatedly swapping adjacent agents in such a way that every intermediate picking sequence remains recursively balanced.
By the previous paragraph, the two allocations are connected by an EF2 reconfiguration path, completing the proof.
\end{proof}

If the number of goods is a multiple of the number of agents, we show that the EF2 guarantee can be improved to EF1.
In fact, we can even ensure that every intermediate allocation is recursively balanced.

\begin{theorem}
\label{thm:recursively-balanced-divisible}
For any instance where $m$ is divisible by $n$, and any initial and target recursively balanced allocations, there exists a reconfiguration path such that every intermediate allocation is recursively balanced (and therefore EF1). 
\end{theorem}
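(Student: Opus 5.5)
The plan is to reuse the skeleton of the proof of \Cref{thm:recursively-balanced-general}, replacing the ``forced pick'' intermediates (which are only EF2) by allocations that are themselves produced by recursively balanced picking sequences. First, since $m$ is divisible by $n$, every round is complete. So every recursively balanced picking sequence is a concatenation of $m/n$ permutations of $N$, and every recursively balanced allocation is exactly-balanced. In particular the size-vector hypothesis is automatic. As in the last paragraph of the proof of \Cref{thm:recursively-balanced-general}, any two such sequences can be connected by swaps of two adjacent agents within a round, and every intermediate sequence stays recursively balanced. Changes of tie-breaking order can be handled as in that proof: such a change only affects a single pick, and that pick can also be viewed as a legitimate choice among tied goods. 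It therefore suffices to show the following. Let $\pi$ and $\pi'$ differ by swapping adjacent picks of agents $i$ and $j$ in some round. Then their outputs are joined by a path all of whose allocations are recursively balanced, possibly under a different valid tie-breaking.

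Let $R$ be the set of goods remaining before the swapped pair. In $\pi$ agent $i$ takes $g$ (her favourite in $R$), and then $j$ takes her favourite in $R\setminus\{g\}$. If $j$'s favourite in $R$ is not $g$, both orders assign the same two goods, and the outputs coincide. Otherwise $g$ is the favourite of both agents. Let $h$ be $j$'s second choice and $g'$ be $i$'s second choice. If $g'=h$, the remaining sets after the pair agree, so the rest of the run is identical and the two outputs differ by a single exchange of $g$ and $h$ between $i$ and $j$.

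The main obstacle is the case $g'\neq h$. Here the two runs continue from remaining sets $R\setminus\{g,h\}$ and $R\setminus\{g,g'\}$, which differ in one good. To handle it I would prove a stronger inductive statement, by induction on the number of remaining goods. Fix a prefix of picks, and let $\sigma$ be the remainder of the current round followed by full rounds. Suppose two states have remaining sets $S\cup\{a\}$ and $S\cup\{b\}$, and their already-made assignments differ only in that some agent holds $b$ in the first state and $a$ in the second. Then the two completed allocations are connected through allocations each of which is the output of some recursively balanced sequence.

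Divisibility is what keeps this induction inside the class. Every suffix consists of the tail of one round plus complete rounds. Hence any allocation obtained by following a recursively balanced prefix with any recursively balanced ordering of the remaining rounds is again recursively balanced. This lets the inductive step reorder the agents of the later rounds freely. In the inductive step I would look at the next picker $c$ and compare her choices from the two remaining sets. If she picks the same good, the invariant persists with one fewer good. If she picks $a$ in one run and $b$ in the other (or one of them and a common good), then either the two runs merge, so that the allocations differ by one exchange, or the ``discrepancy'' is transferred to agent $c$. In the transferred case I would try to reorder the next round so that the agent currently holding the discrepant good picks next; this makes an exchange of $a$ and $b$ an allowed step between two legitimate recursively balanced runs. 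Making this case analysis close, in particular verifying that each exchange is between two allocations that are each genuinely produced by some recursively balanced sequence rather than a forced one, is where I expect most of the work to lie.
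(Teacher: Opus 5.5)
There is a genuine gap. Your reduction to adjacent swaps within rounds is fine, and so are the two easy cases ($g$ not $j$'s favourite; $g'=h$). But the proposal stops exactly at the case that carries the theorem, and the invariant you propose for it does not work.

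\emph{The invariant does not fit the case.} After the swapped pair with $g'\neq h$, the two states differ at \emph{two} agents: $i$ holds $g$ versus $g'$, and $j$ holds $h$ versus $g$. So they are not of the form ``one agent holds $b$ versus $a$''. Bringing them into that form requires a forced pick, e.g.\ $i$ taking $h$ while $g'$ is available.

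\emph{The invariant is not preserved.} Suppose the next picker $c$ takes $a$ in one run and a common good $s$ in the other. You then get a three-good cycle: $x$ holds $b$ versus $a$, $c$ holds $a$ versus $s$, and $s$ versus $b$ remains. This is not a discrepancy ``transferred to $c$''. Reordering only the \emph{next} round does not help, because the agents still to pick in the current round choose from two different remaining sets and can spread the discrepancy further.

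\emph{Ties are not a single-pick change.} A tie-breaking change alters the remaining set for all later picks. Moreover, the proof of \Cref{thm:recursively-balanced-general} handles ties via forced-pick allocations, which are exactly what you must avoid here.

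The missing idea is to abandon $\pi$ and $\pi'$ after the divergence and build two fresh runs. In the rest of the current round and in every later round, one run orders $i$ first and $j$ last, and the other orders $j$ first and $i$ last, with the agents in between in the same order.

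Suppose the closing picks of round $s$ differ: $j$ takes $g_s$ in the first run and $i$ takes $h_s$ in the second, both from the same remaining set. Then in round $s+1$ the opening agent takes exactly the good she herself closed with in the other run: $i$ takes $h_s$ and $j$ takes $g_s$. So the remaining sets coincide again, the middle agents pick identically, and a discrepancy can only recur at the last pick of a round. It must vanish by the last pick of the final round, where a single good remains. This is where divisibility enters, since both $i$ and $j$ pick in every round. The two outputs then differ by exchanging $g$ and that last common good $g_t$ between $i$ and $j$.

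Because these runs use orders different from $\pi$ and $\pi'$, you also need that all legitimate completions of a fixed state (e.g.\ ``$i$ opens by taking $g$'') are connected within the class. The paper gets this by induction on the number of remaining goods for the classes $\Gamma_{i,g}$, where the current partial round is an arbitrary permutation of the agents still to pick in it. It connects tied choices $\Gamma_{i,g}$ and $\Gamma_{i,g'}$ through some $\Gamma_{j,h}$ with $j\neq i$, or through the next round when $i$ is the only agent left in the current one.
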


\begin{proof}
Consider any instance where $m$ is divisible by~$n$.
Let $M'\subseteq M$ be a nonempty set of goods, and suppose that the goods in $M\setminus M'$ have been allocated in a fixed manner.
Let $\emptyset \neq N'\subseteq N$ be such that $|M'| = |N'| + rn$ for some integer $r\ge 0$.
Consider all picking sequences of the form $(\sigma_0 \mid \sigma_1 \mid \dots \mid \sigma_r)$, where $\sigma_0$ is an arbitrary permutation of~$N'$ and $\sigma_1,\dots,\sigma_r$ are arbitrary permutations of~$N$.
Let $\Gamma$ be the set of all allocations that can be produced by any such sequence (with ties broken arbitrarily).
We claim that for any two allocations in~$\Gamma$, there exists a reconfiguration path such that every intermediate allocation also belongs to~$\Gamma$.
This is sufficient to establish the theorem, as we may take $M' = M$.

We prove this claim by induction on $|M'|$.
For the base case $|M'| = 1$, the claim holds trivially.
For the inductive step, suppose that $|M'| \ge 2$, and assume that the claim holds whenever this size is smaller by~$1$.
For $i\in N'$ and $g\in M'$, let $\Gamma_{i,g}$ be the set of all allocations that can be produced if, in the first pick of~$\sigma_0$, agent~$i$ picks good~$g$.
By the inductive hypothesis, for any two allocations in~$\Gamma_{i,g}$, there exists a reconfiguration path such that every intermediate allocation also belongs to~$\Gamma_{i,g}$.

Take any distinct agents $i,j\in N'$ (assuming $|N'| \ge 2$), and let $g$ and $h$ be most valuable goods in~$M'$ for $i$ and $j$, respectively, with ties broken arbitrarily.
We will connect $\Gamma_{i,g}$ with $\Gamma_{j,h}$.
Assume first that $g\ne h$.
In this case, consider a run beginning with $i$ picking~$g$ and then $j$ picking~$h$, and another run beginning with $j$ picking~$h$ and then $i$ picking~$g$.
After the first two picks, both runs have assigned $g$ to~$i$ and $h$ to~$j$, and the set of remaining goods is identical.
Thus, the remaining picks can be completed in the same way, including with the same tie-breaking (if any).
This means that $\Gamma_{i,g} \cap \Gamma_{j,h} \neq \emptyset$.

Assume now that $g = h$.
We will construct two runs.
In $\sigma_0$ of the first (resp., second) run, let agent~$i$ (resp., agent~$j$) pick first, agent~$j$ (resp., agent~$i$) pick last, and agents in $N'\setminus\{i,j\}$ pick in between in the same order.
In both runs, the first agent can pick~$g$ and the agents in $N'\setminus\{i,j\}$ can pick identically.
If the last agent also picks the same good~$g'$ in both runs, we can complete the remainder of the runs identically, and the resulting allocations can reach each other by exchanging $g$ and~$g'$.
Else, assume that the last agent~$j$ picks $g_0$ in the first run, and the last agent~$i$ picks $h_0\ne g_0$ in the second run.
Since the set of remaining goods after $\sigma_0$ differs in the two runs, the next (complete) round $\sigma_1$ exists.

We repeat a similar process: in $\sigma_1$ of the first (resp., second) run, let agent~$i$ (resp., agent~$j$) pick first, agent~$j$ (resp., agent~$i$) pick last, and agents in $N\setminus\{i,j\}$ pick in between in the same order.
In the first run, agent~$i$ can pick~$h_0$, and in the second run, agent~$j$ can pick~$g_0$.
The set of remaining goods is the same in both runs, so the agents in $N\setminus\{i,j\}$ can pick identically.
We then check whether the good~$g_1$ that the last agent~$j$ picks in the first run coincides with the good $h_1$ that the last agent~$i$ picks in the second run.
We proceed in this manner until at the end of some round, the last agent picks the same good in the two runs---such a round must exist because immediately before the last pick of the final (complete) round, exactly one good remains.
Assume that at the end of this round, agent~$j$ picks good~$g_t$ in the first run, and agent~$i$ also picks good~$g_t$ in the second run.
After this pick, the (possibly empty) set of remaining goods is identical in both runs, and the remaining picks can be completed in the same way.
The only difference in the resulting allocations is that in the first run, agent~$i$ owns $g$ and agent~$j$ owns $g_t$, while this is reversed in the second run.
In other words, these allocations can reach each other by exchanging $g$ and~$g_t$.
That is, an allocation in~$\Gamma_{i,g}$ can reach an allocation in $\Gamma_{j,h} = \Gamma_{j,g}$ via an exchange.

It remains to connect $\Gamma_{i,g}$ and $\Gamma_{i,g'}$ for $i\in N'$ and $g\ne g'$, where both $g$ and $g'$ are most valuable goods of agent~$i$ in~$M'$.
Assume first that $|N'| \ge 2$.
Choose any $j\in N'\setminus\{i\}$, and let $h$ be a most valuable good of agent~$j$ in~$M'$ (possibly $h\in\{g,g'\}$).
Our previous argument connects $\Gamma_{i,g}$ with $\Gamma_{j,h}$, and $\Gamma_{j,h}$ with $\Gamma_{i,g'}$, which means that $\Gamma_{i,g}$ and $\Gamma_{i,g'}$ are also connected.
Finally, assume that $|N'| = 1$.
Since $|M'| \ge 2$, the (complete) round $\sigma_1$ exists.
Consider the following two runs.
In the first run, agent~$i$ takes~$g$ in her first pick, is placed first in the next round, and takes~$g'$ in that pick.
In the second run, agent~$i$ takes~$g'$ in her first pick, is placed first in the next round, and takes~$g$ in that pick.
The remainder of both runs can be completed in the same way, and the resulting allocations are the same, so $\Gamma_{i,g} \cap \Gamma_{i,g'} \neq \emptyset$.
Hence, for any two allocations in~$\Gamma$, there exists a reconfiguration path such that every intermediate allocation also belongs to~$\Gamma$.
This completes the induction and therefore the proof.
\end{proof}

The condition that $m$ is divisible by~$n$ is crucial for \Cref{thm:recursively-balanced-divisible}.
As the next proposition demonstrates, without this condition, one cannot guarantee that the intermediate allocations are recursively balanced.

\begin{proposition}
\label{prop:recursively-balanced-non-divisible}
There exists an instance together with initial and target recursively balanced allocations with the same size vector for which no reconfiguration path such that every intermediate allocation is also recursively balanced exists.
\end{proposition}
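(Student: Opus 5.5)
The plan is to build a small explicit counterexample with $n=2$ and $m$ odd, so that $m$ is not divisible by $n$. Fix the size vector $(s_1,s_2)=((m+1)/2,(m-1)/2)$. The recursively balanced picking sequences realizing it are exactly those in which every complete round is one of the permutations $(1,2)$ or $(2,1)$, followed by a final round consisting of agent~$1$ alone.

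The case $m=3$ is useless. With size vector $(2,1)$, any two allocations are related by one exchange of agent~$2$'s single good, so the whole allocation space is connected in one step. I will therefore take $m=5$ (size vector $(3,2)$) and enlarge it to $m=7$ only if necessary. An allocation is then identified with agent~$2$'s $2$-subset of $M$. Two allocations are adjacent via an exchange exactly when these $2$-subsets share exactly one good. The goal is to choose strict utilities (to avoid tie-breaking ambiguity) such that the set of $2$-subsets produced by the four sequences
\[
(1,2,1,2,1),\quad (2,1,1,2,1),\quad (1,2,2,1,1),\quad (2,1,2,1,1)
\]
splits into at least two classes, with any two subsets from different classes disjoint. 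Because every intermediate allocation must itself be recursively balanced, this disjointness is the whole argument: no single exchange can move from one class to the other.

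The construction will rely on how agent~$2$'s first pick depends on the sequence. In the sequences where agent~$2$ picks first, she takes her favourite good, say $a$. In the sequences where agent~$1$ picks first, agent~$1$ should take $a$ as well, forcing agent~$2$ onto a different branch. I will also arrange that agent~$2$'s second good in each branch comes from a separate pool: for instance $\{a,b\}$ on one branch and $\{c,d\}$ on the other. This requires that agent~$1$, after taking $a$, consistently takes $b$ before agent~$2$ can. The orderings of the two agents must be tuned so that all four sequences land in these two disjoint pairs and no mixed pair is produced.

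I expect the main obstacle to be making the four sequences behave coherently. Sequence $(2,1,2,1,1)$ tends to give agent~$2$ her top two goods, and $(1,2,1,2,1)$ tends to mix the pools. If the $m=5$ case with two agents cannot be made disjoint, I will move to $n=3$ with $m=4$ (size vector $(2,1,1)$), where many more sequences exist but the rigidity of singleton bundles helps. Alternatively, I will pad with extra goods that every agent ranks at the bottom, to create more room for separation. Either way, the final verification is a finite enumeration: list every recursively balanced sequence for the chosen size vector, compute the resulting allocation, and check that the induced exchange graph on these allocations is disconnected. Then take the initial and target allocations from different components.
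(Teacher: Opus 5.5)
Your proposal is a search plan rather than a proof: it never exhibits an instance, and the existence claim rests entirely on one. Worse, your primary route is impossible. Take $n=2$ and $m=5$, and let $a,b,c$ be agent~$2$'s three favourite goods in order. Both sequences starting with agent~$2$ give her~$a$. If agent~$1$'s favourite good is not~$a$, agent~$2$ also takes~$a$ in the two sequences starting with agent~$1$, so all four bundles share~$a$. If agent~$1$'s favourite is~$a$, then $(1,2,1,2,1)$ gives agent~$2$ a bundle containing~$b$, and $(1,2,2,1,1)$ gives her exactly $\{b,c\}$. In $(2,1,2,1,1)$ she takes~$a$ and then whichever of $b,c$ agent~$1$ has left, so this bundle meets $\{b,c\}$. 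In every case the exchange graph is connected.

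Enlarging~$m$ or padding with bottom-ranked goods does not help while $n=2$. An induction on the number of remaining goods still goes through, using the swap argument from the proof of \Cref{thm:recursively-balanced-divisible}: once both agents want the same good~$g$, run every round as $(1,2)$ in one run and as $(2,1)$ in the other. Suppose the two runs never coincide at the end of a complete round. Then agent~$1$ takes the single leftover good in the final round of both runs, and the two allocations differ only by exchanging~$g$ with agent~$2$'s last pick. Hence no two-agent counterexample exists at all.

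Your fallback ($n=3$, $m=4$, size vector $(2,1,1)$) is exactly the paper's setting. It needs both explicit rankings and the right isolation criterion, since disjointness of agent~$2$'s bundles is specific to two agents. An exchange alters exactly two bundles. So it suffices to find one recursively balanced allocation that differs in all three bundles from every other recursively balanced allocation with this size vector. The paper uses the rankings $g_1\succ g_2\succ g_3\succ g_4$ for agent~$1$, $g_4\succ g_1\succ g_3\succ g_2$ for agent~$2$, and $g_4\succ g_2\succ g_1\succ g_3$ for agent~$3$. The six sequences then produce only $\cA^1=(\{g_1,g_3\},\{g_4\},\{g_2\})$, $\cA^2=(\{g_1,g_2\},\{g_3\},\{g_4\})$, and $\cA^3=(\{g_2,g_3\},\{g_1\},\{g_4\})$. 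Since $\cA^1$ differs from each of the others in every bundle, it is isolated, and $\cA^1,\cA^2$ serve as initial and target allocations. Until you supply such an instance and carry out this check, the statement remains unproved.
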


\begin{proof}
Consider an instance with $n = 3$ agents and $m = 4$ goods $g_1,\dots,g_4$.
The agents' utility functions are such that
\begin{itemize}
\item $u_1(g_1) > u_1(g_2) > u_1(g_3) > u_1(g_4)$;
\item $u_2(g_4) > u_2(g_1) > u_2(g_3) > u_2(g_2)$;
\item $u_3(g_4) > u_3(g_2) > u_3(g_1) > u_3(g_3)$.
\end{itemize}
Six recursively balanced picking sequences produce an allocation with size vector $(2,1,1)$.
The picking sequences $(1,2,3,1)$, $(2,1,3,1)$, and $(2,3,1,1)$ produce the allocation $\cA^1 = (\{g_1,g_3\},\{g_4\},\{g_2\})$, the picking sequences $(1,3,2,1)$ and $(3,1,2,1)$ produce the allocation $\cA^2 = (\{g_1,g_2\},\{g_3\},\{g_4\})$, and the picking sequence $(3,2,1,1)$ produces the allocation $\cA^3 = (\{g_2,g_3\}, \{g_1\}, \{g_4\})$.

Observe that $\cA^1$ differs from $\cA^2$ in the bundles of all three agents.
Since a single exchange can alter the bundles of only two agents, $\cA^1$ cannot reach $\cA^2$ via an exchange.
Similarly, $\cA^1$ cannot reach~$\cA^3$ via an exchange.
It follows that there is no reconfiguration path between $\cA^1$ and $\cA^2$ such that every intermediate allocation is also recursively balanced.
\end{proof}

It remains open whether \Cref{prop:recursively-balanced-non-divisible} holds if we only require the intermediate allocations to be EF1 instead of recursively balanced.
In particular, $\cA^1$ and $\cA^2$ can be connected via the intermediate allocations $(\{g_2,g_3\}, \{g_4\}, \{g_1\})$ and $\cA^3$, which are EF1 for any utilities consistent with the specified rankings.

The second approach that we investigate is the \emph{envy cycle elimination (ECE)} algorithm of \citet{LiptonMaMo04}.
This algorithm works by allocating one good at a time in arbitrary order.
Each good is allocated to an unenvied agent, and any resulting envy cycle is resolved by giving the bundle of each agent in the cycle to the agent who envies her.
This ensures the existence of an unenvied agent before the next good is allocated.
We show that in our general EF1 counterexample (\Cref{thm:general-nonexistence}), the initial and target allocations can also be produced by the ECE algorithm, which means that the same negative result holds for ECE.

\begin{theorem}
\label{thm:ECE-nonexistence}
For any fixed $n \ge 2$ and $k \ge 1$, there exists an instance with $n$ agents together with initial and target allocations that can be obtained by the ECE algorithm and have the same size vector such that no EF$k$ reconfiguration path exists.    
\end{theorem}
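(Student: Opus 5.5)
The plan is to reuse the instance and the two allocations $\cA = (X\cup\{z,w\}, Y)$ and $\cB = (X_1\cup Y_1\cup\{w\}, X_2\cup Y_2\cup\{z\})$ from the proof of \Cref{thm:general-nonexistence}, including its extension to $n>2$ agents. Since that proof already shows that no EF$k$ reconfiguration path connects $\cA$ and $\cB$, it remains only to show that each of $\cA$ and $\cB$ is an output of the ECE algorithm for a suitable order of the goods and suitable choices among unenvied agents. Where possible I will design these orders so that no envy cycle ever forms; then bundles never rotate, and each good simply stays with the agent it is first given to.

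For $\cA$ (two agents), I would use the following order. First give $z$ and $w$ to agent~$1$, which is legal since every agent is unenvied at the start. Agent~$1$'s utility is now $3r+1$. Next give all of $Y$ to agent~$2$. Agent~$1$ values agent~$2$'s bundle at most $3r < 3r+1$, so agent~$2$ stays unenvied, and no cycle can form because agent~$1$ never envies anyone. Finally give all of $X$ to agent~$1$. Agent~$2$ values agent~$1$'s bundle at most $(r+2)+(5r-3) = 6r-1 < 6r = u_2(Y)$, so agent~$1$ is unenvied at every step, and the output is exactly $\cA$.

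For $\cB$, a key observation is that once $z$ is held by agent~$2$, agent~$1$ envies agent~$2$ for the rest of the run. Indeed, $u_1(B_1) = 2r$, while $z$ alone gives agent~$1$ value $2r+1$ for agent~$2$'s bundle. So once $z$ is placed, agent~$2$ can never receive another good. I will therefore allocate $z$ last, to agent~$2$. Just before that step, agent~$1$ values agent~$2$'s bundle $X_2\cup Y_2$ at $2r = u_1(B_1)$, so agent~$2$ is unenvied and the final assignment is legal.

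The main obstacle is scheduling the remaining goods $X\cup Y\cup\{w\}$ so that each good goes to its $\cB$-owner while that owner is unenvied, and so that no envy cycle arises. The constraint for agent~$2$ to receive a good is that agent~$1$'s value for agent~$2$'s bundle (the number of $Y_2$ goods given so far) is at most $u_1(\text{bundle of }1)$ (the $w$-value $r$ plus the number of $Y_1$ goods given so far). The constraint for agent~$1$ to receive a good is that agent~$2$'s value for agent~$1$'s bundle is at most $2\cdot(\#Y_2) + \#X_2$. My schedule is:
\begin{itemize}
\item give a few $Y_2$ goods to agent~$2$ first (both agents start unenvied);
\item then give $w$ to agent~$1$, which raises agent~$1$'s value by $r$ and creates slack for agent~$2$ to receive up to $r$ more $Y_2$ goods;
\item then interleave the remaining goods, using $X_2$ and $Y_2$ goods to keep agent~$2$'s value high enough that agent~$1$ stays unenvied when receiving $X_1$ and $Y_1$ goods, and using $Y_1$ goods to keep agent~$1$'s count ahead enough that agent~$2$ stays unenvied.
\end{itemize}
Checking that both inequalities hold along the whole schedule is the delicate part. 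The end-state values (agent~$1$: $2r$ vs.\ $2r$; agent~$2$: $6r-1-(r+1)=5r-1$ for her own bundle before $z$ vs.\ $6r-1$ for agent~$1$'s bundle) show that agent~$1$ becomes envied near the end. So the $X_1$ goods must be front-loaded while agent~$2$ still has enough value relative to agent~$1$'s bundle, and the final $X_2$ goods, which agent~$1$ values at $0$, can go last. If an envy cycle cannot be avoided, I would instead exploit a cycle swap deliberately and adjust the order accordingly.

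For $n>2$, I would first give each new good $h_i$ to agent~$i$. Each $h_i$ is worth $0$ to everyone except agent~$i$, so every agent is unenvied at that point. Afterwards I would run the two-agent schedule. Each new agent~$i$ values any other bundle at most $m_0 = u_i(h_i)$, so new agents envy nobody, and no one envies them. Hence no envy cycle can involve a new agent, and the unenvied status of agents~$1$ and~$2$ depends only on each other, exactly as in the two-agent run. This produces the extended $\cA$ and $\cB$, and \Cref{thm:general-nonexistence} finishes the proof.
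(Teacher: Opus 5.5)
\paragraph{Gap in the ECE runs.} Your overall strategy is the paper's: reuse the instance of \Cref{thm:general-nonexistence}, exhibit ECE runs producing $\cA$ and $\cB$, and hand each $h_i$ to agent~$i$ first when $n>2$. Your $n>2$ step is fine. The problem is that the two ECE runs, which are the entire content of the proof, do not work as written.

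For $\cB$, allocating $z$ last in a cycle-free run is impossible, not merely delicate. In a cycle-free run, bundles only grow, and agent~$1$ may receive a good only when $u_2(A_1)\le u_2(A_2)$. While $z$ is unallocated, $A_2\subseteq X_2\cup Y_2$, so $u_2(A_2)\le (r-1)+4r=5r-1$. Hence right after agent~$1$'s last good, $u_2(A_1)\le 5r+1<6r-1=u_2(B_1)$, because $r=3k+2\ge 5$. No reordering escapes this, including front-loading $X_1$, because the constraint binds at agent~$1$'s last allocation, whichever good that is.

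Your observation about $z$ only shows that $z$ must be agent~$2$'s \emph{last} good, not the last good overall. The missing idea is to give $z$ to agent~$2$ before agent~$1$'s $X_1$ goods. This is possible as soon as $u_1(A_2)\le u_1(A_1)$. Once agent~$2$ holds $z$, her own value is $6r\ge u_2(B_1)$, so agent~$1$ stays unenvied while absorbing all of $X_1$. The paper allocates $w,X_2,Y_2',Y_1,Y_2'',z,X_1$ to agents $1,2,2,1,2,2,1$, with $Y_2$ split into two halves of size $r$, and checks each step. Your proposal never produces and verifies any concrete schedule for $\cB$, and the fallback ``exploit a cycle swap'' is not an argument.

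\paragraph{Illegal step in the run for $\cA$.} After $z$ goes to agent~$1$, agent~$2$ has an empty bundle and values agent~$1$'s bundle at $r+1>0$. So agent~$1$ is envied and cannot receive $w$ next. ``Every agent is unenvied at the start'' only licenses the first allocation.

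This one is easy to repair within your cycle-free approach:
\begin{enumerate}
\item After $z$, give $\lceil (r+1)/2\rceil$ goods of $Y$ to agent~$2$. This is legal because agent~$1$ values agent~$2$'s bundle at less than $2r+1=u_1(z)$ throughout.
\item Give $w$ to agent~$1$. This is now legal since $u_2(A_2)\ge r+1=u_2(A_1)$.
\item Give the rest of $Y$ to agent~$2$ and all of $X$ to agent~$1$, exactly as you describe. Your bounds $3r<3r+1$ and $6r-1<6r$ then apply, and no envy cycle ever forms.
\end{enumerate}
This repaired run is actually simpler than the paper's. The paper builds $(Y,X\cup\{z,w\})$ and obtains $\cA$ only by resolving an envy cycle at the very end. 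As submitted, however, neither of your two runs is a valid ECE execution.
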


\begin{proof}
It suffices to show that in the examples constructed in \Cref{thm:general-nonexistence}, both the initial and target allocations can be obtained by the ECE algorithm.

We start with the example for $n = 2$.
For the allocation~$\cA = (X\cup\{z,w\}, Y)$, we perform the following steps.
\begin{enumerate}[label=\arabic*.]
\item Allocate one good from~$Y$ to agent~$1$.
\item Repeat the following $2r$ times: Allocate two goods from $X$ to agent~$2$, followed by one good from~$Y$ to agent~$1$.
\item Allocate $z$ to agent~$2$.
\item Allocate the remaining $r-3$ goods from~$X$ to agent~$2$.
\item Allocate the remaining $r-1$ goods from~$Y$ to agent~$1$.
\item Allocate $w$ to agent~$2$.
\end{enumerate}
We claim that an envy cycle occurs only at the end of this process---in particular, agent~$1$ does not envy agent~$2$ before the end of the process---and each good is allocated to an unenvied agent.
During Steps 1 and 2, agent~$2$ only receives goods from~$X$, which agent~$1$ values at~$0$.
Also, for $t\in\{0,\dots,2r-1\}$, after $t$ iterations of Step~2, agent~$1$ holds $t+1$ goods from~$Y$ and agent~$2$ holds $2t$ goods from~$X$.
Once the next two goods from~$X$ are allocated to agent~$2$, her utility for her own bundle becomes $2t+2$, which is the same as her utility for agent~$1$'s bundle---this means that agent~$1$ is unenvied before the next good from~$Y$ is allocated.
After Step~3, agent~$1$'s utility becomes $2r+1$ for each of the two bundles, and this remains true during Step~4.
At the beginning of Step~5, agent~$2$'s utility for her own bundle is $6r-2$ and her utility for agent~$1$'s bundle is $4r+2$; the latter utility increases to $6r-2$ before the last good of this step is allocated, so agent~$1$ is unenvied.
At the end of Step~5, agent~$1$ has utility $3r$ for her own bundle and $2r+1$ for agent~$2$'s bundle, so agent~$2$ is unenvied.
At the end of Step~6, the allocation is $(Y, X\cup\{z,w\})$.
Agent~$1$ has utility $3r$ for her bundle and $3r+1$ for agent~$2$'s bundle, whereas agent~$2$ has utility $6r-1$ for her bundle and $6r$ for agent~$1$'s bundle.
Thus, there is an envy cycle between the two agents.
Resolving the envy cycle results in the allocation~$\cA$.

Next, we consider the allocation $\cB = (X_1\cup Y_1\cup\{w\}, X_2\cup Y_2\cup \{z\})$.
Partition $Y_2$ into two sets $Y'_2, Y''_2$, each of size~$r$.
Allocate $w, X_2, Y'_2, Y_1, Y''_2, z, X_1$ to agent $1, 2, 2, 1, 2, 2, 1$ in this order.
We claim that no envy cycle occurs during this process and each good is allocated to an unenvied agent.
Initially, agent~$1$ receives utility~$r$ for~$w$.
After agent~$2$ receives $X_2$ and $Y'_2$, agent~$1$'s utility for agent~$2$'s bundle becomes~$r$, so agent~$1$ does not envy agent~$2$ up to this point. 
Agent~$2$ has utility $3r-1$ for her own bundle and $1$ for agent~$1$'s bundle.
When agent~$1$ next receives $Y_1$, her utility for her own bundle becomes $2r$ while agent~$2$'s utility for agent~$1$'s bundle increases to $1 + r\cdot 2 = 2r+1$.
After agent~$2$ receives $Y''_2$, agent~$1$'s utility for agent~$2$'s bundle also becomes $2r$, equal to agent~$1$'s utility for her own bundle.
At this point, agent~$2$'s utility for her own bundle is $5r-1$.
After receiving~$z$, this becomes~$6r$, while even after agent~$1$ receives~$X_1$, agent~$2$'s utility for agent~$1$'s bundle is only $6r-1$.
Thus, no envy cycle occurs and each good is allocated to an unenvied agent, and the process produces the allocation~$\cB$.
This completes the proof for the $n = 2$ example.

\medskip

Finally, we consider the example for general~$n$.
First, for $j\in\{3,\dots,n\}$, we allocate $h_j$ to agent~$j$ in arbitrary order.
Since each good~$h_j$ is positively valued only by agent~$j$, there is no envy cycle.
We then allocate the goods that are ultimately held by agents $1$ and~$2$ as in the $n = 2$ example.
Clearly, agents~$1$ and $2$ never envy the additional agents.
Moreover, each additional agent has utility $m_0$ for her own bundle and $m_0$ for the remaining goods combined, so she never envies any agent.
Therefore, the only possible envy cycle is between agents~$1$ and~$2$.
It follows that the same allocation process as in the $n = 2$ example can be performed in order to arrive at the desired allocations $\cA$ and $\cB$.
\end{proof}

Another prominent method for producing EF1 allocations is the \emph{maximum Nash welfare (MNW)} solution \citep{CaragiannisKuMo19}.
The \emph{Nash welfare} of an allocation is defined as the product of agents' utilities from the allocation, and an allocation is called an \emph{MNW allocation} if it maximizes the Nash welfare among all allocations.\footnote{Careful tie-breaking is necessary if the maximum possible Nash welfare is $0$, but this will not be relevant for our purposes.}
We show that whenever there are at least three agents, even if the initial and target allocations are MNW allocations which are moreover envy-free, and the agents have identical utilities, an EF$k$ reconfiguration path may still fail to exist.
This strengthens the construction of \citet[Thm.~4.9]{IgarashiKaSu24} for an EF1 reconfiguration path and identical utilities, where the initial and target allocations are EF1 but neither MNW nor envy-free.

\begin{theorem}
\label{thm:MNW-nonexistence}
For any fixed $n \ge 3$ and $k \ge 1$, there exists an instance with $n$ agents having identical utilities together with initial and target MNW allocations which are envy-free and have the same size vector, such that no EF$k$ reconfiguration path exists.    
\end{theorem}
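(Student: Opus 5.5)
Since the utilities are identical, an allocation is envy-free exactly when all bundles have the same value. Any such allocation splits the total utility equally, so by AM--GM it maximizes the product of utilities. It is therefore an MNW allocation whenever the common value is positive. This reduces the theorem to a purely combinatorial task: for $n\ge 3$ and identical additive $u$, find two \emph{equal-value} allocations with the same size vector that no EF$k$ exchange path connects. I would first do $n=3$ and then pad with dummy agents, as in the proof of \Cref{thm:general-nonexistence}. For the padding, each new agent $i$ receives a single good $h_i$ worth exactly the common bundle value $V$. Any exchange involving such an agent leaves her with, or hands her, a bundle whose value differs from $V$ by a large amount relative to the removal of $k$ goods. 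The first such exchange therefore violates EF$k$, and every exchange must stay among the three original agents.

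For $n=3$ I would mimic the $z$-crossing argument of \Cref{thm:general-nonexistence}. I take a special good $z$ of value $L$, ``medium'' goods of value $2$ and unit goods of value $1$, zero-valued filler goods to fix the sizes, and parameters polynomial in $k$ (say $L=\Theta(k)$ with suitable multiplicities). In the initial allocation, agent~$1$ holds $z$ plus filler. In the target allocation, $z$ sits with agent~$2$ and agent~$1$ holds valuable small goods. All three bundles have value $V$ in both allocations, and the size vectors agree thanks to the filler. I then look at the first exchange moving $z$ away from agent~$1$, and let $t$ be the value of the non-$z$ part of agent~$1$'s bundle just before it. There are two constraints. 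Before the exchange, some other agent has value at most $(3V-L-t)/2$, while agent~$1$'s bundle minus its $k$ most valuable goods is worth at least $t-2(k-1)$. After the exchange, agent~$1$ has value at most $t+2$, while the richest other agent's bundle minus $k$ goods is worth at least about $(3V-t-2)/2 - 2k$.

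A quick sanity check shows that with only unit goods these two inequalities are compatible: the parameter $L$ cancels and no contradiction arises. The construction must therefore break the symmetry between ``before'' and ``after''. The plan is to make the $k$ most valuable non-$z$ goods that agent~$1$ can accumulate cheap, namely units, while the other agents' wealth after the exchange is carried by goods that survive the removal of $k$ goods. The agent receiving $z$ is also compared with $z$ itself removed. Concretely, I would try letting the third agent keep a large fixed block that cannot profitably be involved. Then the ``before'' bound compares agent~$1$ with the poorer of agents~$2$ and $3$, whereas the ``after'' bound compares agent~$1$ with the agent who did not receive $z$. I would tune the multiplicities so that the first inequality forces $t\le V/2 - \Omega(L)$ while the second forces $t\ge V/2-O(k)$. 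Choosing $L\gg k$ (for instance $L=ck$ for a large constant $c$) then yields the contradiction.

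I expect this balancing to be the main obstacle: making both the initial and target allocations exactly equal-valued, with identical size vectors, while the crossing of $z$ is forced to be lopsided. My first attempt is the two-agent skeleton of \Cref{thm:general-nonexistence}, with agent~$3$ acting as a value reservoir that absorbs the utility difference between agents~$1$ and~$2$. Identical utilities mean agent~$3$ cannot simply be indifferent, so her bundle must be made robust, for example through many mid-valued goods, so that she is too rich to trade down and too poor to be traded into without breaking EF$k$. Once the parameters are fixed, verifying envy-freeness, the equal sizes and the two displayed inequalities is routine arithmetic.
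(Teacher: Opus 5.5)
Your reduction to equal-value allocations is correct: with identical utilities these are envy-free and, by AM--GM, MNW. The gap is that the proposal never produces the $n=3$ instance. The step you defer as ``tuning the multiplicities'' is the whole theorem, not routine arithmetic.

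The obstruction is structural. In every EF$k$ comparison against $z$'s holder, $z$ is the first good removed. So the crossing only forces the agents' non-$z$ values to agree up to $O(k)$, and small exchanges can arrange that beforehand. Take your unit-goods skeleton with $u(z)=V$ divisible by $3$ and zero-valued fillers:
\begin{itemize}
\item agent~$1$ swaps fillers for units, alternately with agents $2$ and $3$, until each of the three holds $2V/3$ units;
\item agent~$1$ hands $z$ to agent~$2$ for a filler;
\item agent~$2$ then pays out her units alternately to agents $1$ and $3$ in exchange for fillers.
\end{itemize}
One can check that every allocation along the way is even EF1. Your proposed remedy, a robust reservoir agent~$3$, does not escape this. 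Robustness (many small goods, so removing $k$ of them costs little) is exactly what lets her give or absorb value in small steps. She cannot be both a strong comparator and frozen, and you give no mechanism that freezes her.

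The padding step is also wrong as stated. Two padding agents can swap $h_i$ and $h_j$ without losing envy-freeness. Worse, in your sketch agent~$1$ initially holds $z$ plus zero-valued filler, so $u(z)=V=u(h_i)$. Swapping $z$ with $h_i$ therefore keeps the allocation envy-free, and it is false that all exchanges stay among the three original agents.

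The idea you are missing is to drop the crossing argument and use \emph{fragility} directly. Give the two agents who must trade places bundles of just two large goods:
\begin{itemize}
\item $X=\{x_1,x_2\}$ with values $3k+4$ and $k+2$;
\item $Y=\{y_1,y_2\}$ with values $2k+3$ each.
\end{itemize}
Both total $r=4k+6$. Give every other agent $r$ unit goods, and let the target swap $X$ and $Y$ between agents $1$ and $2$. All bundles are worth $r$, so both allocations are envy-free and MNW.

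Now consider the first exchange touching agent~$1$ or $2$.
\begin{itemize}
\item A swap between them leaves one of them with $3k+5$, because no single swap preserves both totals.
\item A swap with an agent $j\ge 3$ trades a good worth at least $k+2$ for a unit, again leaving at most $3k+5$.
\end{itemize}
In either case, some agent $j\ge 3$ still has at least $r-k=3k+6$ after her $k$ most valuable goods are removed. So the path cannot even begin. The unit-good agents are the robust comparators, which is exactly where $n\ge 3$ enters, and no separate padding argument is needed.
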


\begin{proof}
Fix $n \ge 3$ and $k \ge 1$, and let $r = 4k+6$. 
Consider an instance with $n$~agents where the set of goods consists of $X = \{x_1,x_2\}$, $Y = \{y_1,y_2\}$, and $Z_i = \{z_{i,1},\dots,z_{i,r}\}$ for each $i\in \{3,\dots,n\}$; note that the number of goods is $(n-2)r + 4$. 
Every agent has the same utility function~$u$ such that $u(x_1) = 3k+4$, $u(x_2) = k+2$, $u(y_1) = u(y_2) = 2k+3$, and $u(z_{i,\ell}) = 1$ for all $i\in\{3,\dots,n\}$ and $\ell\in\{1,\dots,r\}$.
Observe that each agent's utility for each of $X$, $Y$, and $Z_i$ for $i\in\{3,\dots,n\}$ is $4k+6 = r$.

Let $\cA = (X, Y, Z_3, \dots, Z_n)$ and $\cB = (Y, X, Z_3, \dots, Z_n)$.
Since each agent has utility $r$ for every bundle in both $\cA$ and $\cB$, these two allocations are envy-free.
Moreover, each agent's total utility for all goods is $nr$.
Since the agents have identical utilities, by the inequality of arithmetic and geometric means, the Nash welfare is maximized when each agent receives utility exactly~$r$.
Hence, both $\cA$ and $\cB$ are MNW allocations.

Assume for contradiction that an EF$k$ reconfiguration path from $\cA$ to $\cB$ exists.
Since $A_1 \ne B_1$ and $A_2 \ne B_2$, some exchange must involve agent~$1$ or $2$.
Consider the first exchange involving either of these two agents.
Before this exchange, agent~$1$ has $X$, agent~$2$ has~$Y$, and each remaining agent has $r$ of the remaining goods.
If the exchange is between agents $1$ and $2$, then after the exchange, one of these agents has utility at most $3k+5$.
However, even after removing $k$ goods from any remaining agent's bundle, the utility for this bundle remains $r-k = 3k+6$.
This means that the allocation after the exchange is not EF$k$, a contradiction.
Hence, the exchange must be between an agent $i\in\{1,2\}$ and an agent $j\in N\setminus\{1,2\}$.
After the exchange, agent~$i$ has utility at most $r - (k+2) + 1 = 3k+5$.
On the other hand, after removing any $k$ goods from agent~$j$'s bundle, the utility for this bundle remains at least $r-k = 3k+6$.
Again, this means that the allocation after the exchange is not EF$k$, yielding the desired contradiction.
\end{proof}

\section{Complexity and Path Length}
\label{sec:complexity}

In this section, we examine the complexity of deciding whether an EF1 reconfiguration path exists between two EF1 allocations, as well as the worst-case minimum length of such a path when it exists.
We also discuss the relationship between these two problems.

For convenience, let \textsc{EF1 Reconfiguration} refer to the following problem: given an instance along with initial and target EF1 allocations, determine whether there exists an EF1 reconfiguration path between the two allocations.
\citet{IgarashiKaSu24} showed that \textsc{EF1 Reconfiguration} is PSPACE-complete for a general number of agents~$n$, but left open whether the problem can be solved in polynomial time for two agents.
We demonstrate that this is unlikely, by proving that the problem is NP-hard for any constant~$n$.

\begin{theorem}
\label{thm:NP-hard}
For any fixed number of agents $n \ge 2$, \textsc{EF1 Reconfiguration} is NP-hard, even when the initial and target allocations are exactly-balanced.
\end{theorem}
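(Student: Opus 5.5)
We will reduce from \textsc{Partition} (equivalently, a balanced variant in which the two halves must also have equal cardinality, which remains NP-hard). First we handle $n=2$, and then we extend to arbitrary fixed $n$ by padding with extra agents, exactly as in the proof of \Cref{thm:general-nonexistence}.

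\textbf{Construction for $n=2$.} Given positive integers $a_1,\dots,a_{2p}$ with total $2S$, the instance has the following goods:
\begin{itemize}
\item ``number goods'' $g_1,\dots,g_{2p}$, valued by both agents proportionally to $a_i$ (scaled by a large factor $L$);
\item ``filler goods'' of small value;
\item two special ``blocker'' goods $z,w$, in the spirit of \Cref{thm:general-nonexistence}.
\end{itemize}
The initial allocation $\cA$ gives $z$ to agent~$1$ and $w$ to agent~$2$; the target $\cB$ swaps them. All other goods are split identically and exactly-balanced in $\cA$ and $\cB$. Valuations are chosen so that:
\begin{itemize}
\item while $z$ and $w$ sit on their initial sides, EF1 is slack and goods can be reshuffled freely;
\item swapping $z$ and $w$ is EF1-feasible only when the number goods are divided so that each agent's share of them has value exactly $S$ (with the scaling by $L$ making any deviation of at least one unit fatal, since fillers cannot compensate);
\item after the swap, EF1 again imposes only slack constraints, so the path can return to the target split.
\end{itemize}
Using exchanges, the only way to move $z$ away from agent~$1$ is through an exchange that either trades $z$ for $w$ directly or passes through an intermediate state. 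We will make $z$ dominant enough that removing it from agent~$2$'s bundle is the only way agent~$1$ can tolerate agent~$2$, which pins down the inequalities at the crossing moment, as in the $t,t'$ argument of \Cref{thm:general-nonexistence}.

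\textbf{Correctness.} The two directions are:
\begin{itemize}
\item \emph{Completeness.} If a partition exists, we exhibit an explicit path. Exchange number goods one pair at a time, interleaved with filler exchanges, until agent~$1$ holds the partition half. Then exchange $z$ with $w$, and finally undo the rearrangement. At every step, EF1 is checked by showing that each agent's removal of the opposing special good leaves a bundle of value at most her own.
\item \emph{Soundness.} Take the first exchange that moves $z$. Compute both agents' utilities immediately before and immediately after it, as in \Cref{thm:general-nonexistence}. The two EF1 constraints force agent~$1$'s number-good value to be both $\le S$ and $\ge S$, which yields a partition.
\end{itemize}

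\textbf{Main obstacle.} The hardest part is the soundness direction, specifically ruling out paths in which $z$ is traded for some good other than $w$, or in which filler goods absorb the slack. To handle this, we will make the number-good values multiples of a large $L$ exceeding the total filler value. The cardinality constraint coming from exchanges, namely that the size vector stays fixed, will let us count goods exactly, just as $t$ counted $Y$-goods earlier.

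\textbf{Extension to general $n$.} For $n>2$, we add agents $3,\dots,n$. Each added agent holds a private good worth more than everything else, with the same number of goods per agent to keep the allocations exactly-balanced; several private goods each suffice. As in \Cref{thm:general-nonexistence}, any exchange involving a new agent immediately violates EF1, so the new agents never trade. Hence any EF1 path in the padded instance yields an EF1 path in the two-agent instance, and the reduction carries over.
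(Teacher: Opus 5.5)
\paragraph{Gap 1: the two-agent gadget.} Your soundness skeleton is the paper's: look at the first exchange that moves the switch good and combine the EF1 conditions just before and just after it. However, the two-agent gadget is never built, and the properties you ask of it are inconsistent.

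As literally specified, $\cA$ and $\cB$ agree on every good except $z$ and $w$, so $\cB$ is a single exchange away from $\cA$. Both are EF1 by the definition of \textsc{EF1 Reconfiguration}, so every instance you output is a Yes-instance.

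More fundamentally, the swap of $z$ and $w$ is one exchange between a first-phase allocation and a second-phase allocation with the same split of the number goods. It is feasible exactly when both allocations are EF1. If EF1 were slack in both phases, as your first and third bullets demand, the swap could not be constrained either. The constraint has to live in the phases themselves. While $z$ is with agent~$1$, EF1 must confine agent~$1$'s share $\beta$ of the number goods to a window $I$; afterwards it must confine $\beta$ to a window $I'$, with $I\cap I'=\{S\}$. Hence the initial split must lie in $I$ and the target split in $I'$, so the number goods must be split \emph{differently} in $\cA$ and $\cB$. In the paper, $\beta=2t$ initially and $\beta=0$ at the target.

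The paper realizes this as follows.
\begin{itemize}
\item Two switch goods $z_1,z_2$ are worth $2t$ and $3t$ to both agents.
\item Heavy goods $X$ (with agent~$1$) and $Y$ (with agent~$2$) provably never move. So each EF1 check always removes the same heavy good and collapses to $-t\le\delta_1-\delta_2\le t$. This yields the windows $[t,2t]$ and $[0,t]$.
\item Zero-valued goods $H$ let number goods move under a fixed size vector.
\item The same computation shows $z_1$ and $z_2$ can never sit with the same agent. So the first exchange moving $z_1$ must trade it for $z_2$ and leaves $\beta$ unchanged. This is exactly what settles the ``main obstacle'' you flag but leave open.
\end{itemize}

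\paragraph{Gap 2: the padding for $n\ge 3$.} The padding also fails as stated. \Cref{thm:general-nonexistence} rules out exchanges with a new agent only because she holds a \emph{single} good, whose loss drops her utility to almost nothing. Under exact balance she must hold as many goods as agents $1$ and $2$. Whether her other goods are fillers or further private goods each worth more than everything else, she can give one away and still not envy anyone beyond one good. So your claim that any exchange involving a new agent violates EF1 is false.

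New agents can then act as temporary storage for number goods. The shares of agents $1$ and $2$ no longer sum to a constant, and the switch no longer forces an exact partition. In the paper's gadget, for instance, it would only force agents $1$ and $2$ to hold number goods of equal value. The paper avoids this by giving each new agent $r+5$ private goods of value $1$ and utility $r+5$ for each of $x_1,x_2$. She then sits exactly at the EF1 threshold toward agent~$1$, and any exchange costing her a private good violates EF1.
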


\begin{proof}
We start by presenting the reduction for $n = 2$ agents and later extend it to arbitrary~$n$.

For $n = 2$, we reduce from the NP-hard problem \textsc{Partition}.
In this problem, we are given positive integers $a_1,\dots,a_r$ with sum $2t$ for some positive integer~$t$, and the task is to determine whether these integers can be partitioned into two parts such that the sum of each part is exactly~$t$.
Let $T = 10t$.
Given an instance of \textsc{Partition}, we create an instance of \textsc{EF1 Reconfiguration} as follows.
The set of goods~$M$ consists of $X = \{x_1,x_2\}$, $Y = \{y_1,y_2,y_3,y_4\}$, $Z = \{z_1,z_2\}$, $G = \{g_1,\dots,g_r\}$, and $H = \{h_1,\dots,h_{r+2}\}$; note that the total number of goods is $2r+10$.
The utilities for the goods in $X$ and $Y$ are shown in the following table.
\begin{center} 
    \begin{tabular}{c|cccccc}
        & $x_1$ & $x_2$ & $y_1$ & $y_2$ & $y_3$ & $y_4$ \\
        \hline
        $u_1$ & $12T+t$ & $12T$ & $8T$ & $8T$ & $8T$ & $8T$ \\
        $u_2$ & $12T$ & $12T$ & $3T+t$ & $3T$ & $3T$ & $3T$
    \end{tabular}
\end{center}
Both agents value the remaining goods identically: for $i\in\{1,2\}$, $u_i(z_1) = 2t$, $u_i(z_2) = 3t$, $u_i(g_\ell) = a_\ell$ for $\ell\in\{1,\dots,r\}$, and $u_i(h_\ell) = 0$ for $\ell\in\{1,\dots,r+2\}$.
The initial allocation is
\[
\cA = 
(\{x_1,x_2,z_1,g_1,\dots,g_r,h_1,h_2\},\{y_1,y_2,y_3,y_4,z_2,h_3,\dots,h_{r+2}\})
\]
and the target allocation is
\[
\cB =
(\{x_1,x_2,z_2,h_1,\dots,h_{r+2}\}, \{y_1,y_2,y_3,y_4,z_1,g_1,\dots,g_r\}).
\]
Clearly, this instance can be constructed in polynomial time.
Since each agent receives $r+5$ goods in both allocations, these allocations are exactly-balanced.
Moreover, one can check that 
\[
u_1(A_1) = 24T + 5t >  24T + 3t = u_1(A_2\setminus\{y_1\})
\]
and
\[
u_2(A_2) = 12T + 4t = u_2(A_1\setminus\{x_1\}), 
\]
so $\cA$ is EF1. Similarly,
\[
u_1(B_1) = 24T + 4t = u_1(B_2\setminus\{y_1\})
\]
and
\[
u_2(B_2) = 12T + 5t > 12T + 3t = u_2(B_1\setminus\{x_1\}),
\]
so $\cB$ is EF1 as well.

We now establish the correctness of the reduction by proving that a Yes-instance of \textsc{Partition} corresponds to a Yes-instance of \textsc{EF1 Reconfiguration}, and vice versa. 

\medskip

($\Leftarrow$) Suppose that the \textsc{EF1 Reconfiguration} instance is a Yes-instance.
Consider any EF1 reconfiguration path.

First, we claim that the goods in $X$ and $Y$ cannot move.
Assume otherwise, and consider the first exchange that involves a good in either $X$ or~$Y$.
If it involves a good in~$X$, then after the exchange, agent~$1$'s utility for her own bundle is at most $20T+8t < 21T$, while her utility for agent~$2$'s bundle even after removing any good is still at least $24T$.
This means that EF1 is violated for agent~$1$, a contradiction.
Else, the exchange involves a good in~$Y$ but neither of the goods in~$X$.
After the exchange, agent~$2$'s utility for her own bundle is at most $9T+8t < 10T$, while her utility for agent~$1$'s bundle even after removing any good is still at least $15T$.
This means that EF1 is violated for agent~$2$, again a contradiction.
Therefore, the goods in $X$ and $Y$ cannot move.
In particular, agent~$1$'s highest utility for a good in agent~$2$'s bundle is always $8T$, and agent~$2$'s highest utility for a good in agent~$1$'s bundle is always $12T$.

For each agent $i\in \{1,2\}$ and each allocation, let $\delta_i$ denote agent~$i$'s utility for the goods in $M\setminus(X\cup Y)$ held by herself, and let $\Delta = \delta_1 - \delta_2$.
Observe that agent~$1$ is EF1 exactly when $24T + t + \delta_1 \ge (32T + \delta_2) - 8T$, that is, $\Delta \ge -t$.
Similarly, agent~$2$ is EF1 exactly when $12T + t + \delta_2 \ge (24T + \delta_1) - 12T$, that is, $\Delta \le t$.
Hence, every allocation on the reconfiguration path must satisfy $-t\le \Delta \le t$.

Next, for any allocation, let $\beta\in [0,2t]$ be the total utility of goods in~$G$ held by agent~$1$.
If both $z_1$ and $z_2$ are with agent~$1$, we have $\Delta = (5t+\beta) - (2t-\beta) = 3t + 2\beta > t$, so the allocation cannot be EF1.
Likewise, if both $z_1$ and $z_2$ are with agent~$2$, we have $\Delta = \beta - (7t-\beta) = 2\beta - 7t < -t$, so the allocation cannot be EF1.
Thus, for every allocation on the reconfiguration path, one agent holds~$z_1$ and the other agent holds~$z_2$.
When agent~$1$ holds~$z_1$ and agent~$2$ holds~$z_2$, we have $\Delta = (2t+\beta) - (5t-\beta) = 2\beta - 3t$, so the allocation is EF1 if and only if $t \le \beta \le 2t$.
When agent~$2$ holds~$z_1$ and agent~$1$ holds~$z_2$, we have $\Delta = (3t+\beta) - (4t-\beta) = 2\beta - t$, so the allocation is EF1 if and only if $0 \le \beta \le t$.
Since $z_1$ is with agent~$1$ in~$\cA$ and with agent~$2$ in $\cB$, some exchange must involve~$z_1$; in particular, it must exchange $z_1$ with~$z_2$.
Consider the first such exchange, and note that it does not change the value of $\beta$.
Before the exchange, EF1 requires $t\le \beta \le 2t$, and after the exchange, EF1 requires $0 \le \beta \le t$.
Hence, $\beta = t$, and the partition of~$G$ before (or after) the exchange corresponds to a solution of \textsc{Partition}.
It follows that the \textsc{Partition} instance is a Yes-instance.
\medskip

($\Rightarrow$)
Suppose that the \textsc{Partition} instance is a Yes-instance.
Let $L\subseteq\{1,\dots,r\}$ be such that $\sum_{\ell\in L}a_\ell = \sum_{\ell\not\in L}a_\ell$.

We construct an EF1 reconfiguration path as follows.
Starting from~$\cA$, for $\ell\in L$, let agent~$1$ exchange $g_\ell$ with an arbitrary good from~$H$ held by agent~$2$.
During this process, $\beta$ decreases from $2t$ to~$t$, and the allocation remains EF1 throughout.
Next, exchange $z_1$ with $z_2$.
Note that $\beta$ remains at~$t$, and EF1 is maintained.
Then, for $\ell\not\in L$, let agent~$1$ exchange $g_\ell$ with an arbitrary good from~$H$ held by agent~$2$.
During this process, $\beta$ decreases from $t$ to~$0$, and the allocation again remains EF1 throughout.
The final allocation reached is the target allocation~$\cB$.
It follows that the \textsc{EF1 Reconfiguration} instance is a Yes-instance.

\medskip

Finally, we extend our reduction to any fixed $n\ge 2$.
For each additional agent $i\in\{3,\dots,n\}$, we introduce a set $F_i$ of $r+5$ new goods, and give $F_i$ to agent~$i$ in both~$\cA$ and $\cB$.
The remaining goods are allocated between the two original agents in the same way as before.
Both original agents have utility~$0$ for all new goods.
Each new agent~$i$ has utility~$1$ for each good in $F_i$, utility~$0$ for each good in $F_j$ for $j\ne i$, utility $r+5$ for each of $x_1$ and $x_2$, and utility~$0$ for all other original goods.
Note that both the initial and target allocations are exactly-balanced, as each agent receives $r+5$ goods.
Clearly, agents~$1$ and $2$ do not envy any new agent, and the new agents do not envy one another.
Also, each new agent's utility for agent~$1$'s bundle after removing~$x_1$ is $r+5$, the same as her utility for her own bundle.
Hence, both allocations are EF1.

We show that an EF1 reconfiguration path exists in this extended instance if and only if such a path exists in the original two-agent instance.
Consider an EF1 reconfiguration path in the extended instance.
We claim that every exchange must be between agents $1$ and $2$ and involve two goods outside $X\cup Y$.
Assume otherwise, and consider the first exchange not of this type.
If it is between agents $1$ and $2$ (and involves at least one good in $X\cup Y$), the previous argument yields an EF1 violation.
Hence, the exchange must involve a new agent.
If agent~$1$ exchanges a good from~$X$ with a new agent, agent~$1$'s utility for her own bundle becomes at most $12T + 8t < 13T$, while her utility for agent~$2$'s bundle even after removing any good is at least $24T$.
This means that EF1 is violated for agent~$1$, a contradiction.
Similarly, if agent~$2$ exchanges a good from~$Y$ with a new agent, agent~$2$'s utility for her own bundle becomes at most $9T + 8t < 10T$, while her utility for agent~$1$'s bundle even after removing any good is at least $12T$.
This means that EF1 is violated for agent~$2$, a contradiction.
Lastly, suppose that a new agent~$i$ exchanges a good with another agent's good, where the latter good is outside $X\cup Y$.
After the exchange, agent~$i$'s utility for her own bundle is at most $r+4$, while her utility for agent~$1$'s bundle even after removing any good is at least $r+5$.
Hence, EF1 is violated for agent~$i$, again a contradiction.

Therefore, every exchange on an EF1 reconfiguration path in the extended instance must be between agents $1$ and $2$ and involve two goods outside $X\cup Y$.
This induces an EF1 reconfiguration path in the original two-agent instance.
Conversely, an EF1 reconfiguration path in the original instance also induces one in the extended instance, by keeping all goods in $X\cup Y$ as well as the bundles of agents $3,\dots,n$ fixed.
As in our arguments for $\cA$ and $\cB$, EF1 is maintained throughout the latter path.
This completes the proof.
\end{proof}

In light of \Cref{thm:NP-hard}, it is natural to ask whether \textsc{EF1 Reconfiguration} belongs to NP in the smallest case of two agents, as this would imply NP-completeness.
While we are unable to answer this question, we present a polynomial-time reduction from the problem \textsc{Subset Sum Reconfiguration} to \textsc{EF1 Reconfiguration} for $n = 2$.
In \textsc{Subset Sum Reconfiguration}, we are given a multiset of positive integers $\{a_1,\dots,a_r\}$, a lower threshold $\underline{t}$, an upper threshold $\overline{t}$, and---writing $R = \{1,\dots,r\}$---subsets $R_1,R_2\subseteq R$ such that both $\sum_{\ell\in R_1}a_\ell$ and $\sum_{\ell\in R_2}a_\ell$ belong to $[\underline{t}, \overline{t}]$.
The task is to determine whether $R_1$ can be transformed into~$R_2$ by adding or removing one index at a time in such a way that for each intermediate set~$R'$, it also holds that $\sum_{\ell\in R'}a_\ell$ belongs to $[\underline{t}, \overline{t}]$.
\citet{ItoDe14} showed that \textsc{Subset Sum Reconfiguration} is NP-hard, but left membership in NP open.
Our reduction in the following theorem implies that if \textsc{EF1 Reconfiguration} with two agents belongs to NP, then so does \textsc{Subset Sum Reconfiguration}---this would settle the question of \citet{ItoDe14}.

\begin{theorem}
\label{thm:polytime-reduction}
There exists a polynomial-time reduction from \textsc{Subset Sum Reconfiguration} to \textsc{EF1 Reconfiguration} with $n = 2$ agents and exactly-balanced initial and target allocations.
\end{theorem}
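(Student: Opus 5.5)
The plan is to reuse the gadget from the proof of \Cref{thm:NP-hard}, with the two-sided window $-t\le\Delta\le t$ replaced by the window $[\underline{t},\overline{t}]$ of the \textsc{Subset Sum Reconfiguration} instance. Let $S=\sum_{\ell\in R}a_\ell$. We may assume $0\le\underline{t}\le\overline{t}\le S$ after clipping the thresholds. Choose $T$ much larger than $S$, say $T=10(S+1)$. The heavy goods are $X=\{x_1,x_2\}$ and $Y=\{y_1,\dots,y_4\}$, valued as in \Cref{thm:NP-hard} except for the two perturbations:
\[
u_1(x_1)=12T+p, \qquad u_2(y_1)=3T+q .
\]
The offsets are $p=S-2\underline{t}$ and $q=2\overline{t}-S$. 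Either may be negative, but both have absolute value at most $S\ll T$, so every utility stays positive. Each index $\ell$ becomes a good $g_\ell$ with $u_1(g_\ell)=u_2(g_\ell)=a_\ell$. We also add a set $H$ of zero-valued goods, large enough (say $|H|=2r+2$) that each agent always holds some good of $H$, and sized so that both bundles have equal cardinality.

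The initial allocation gives agent~$1$ the set $X\cup\{g_\ell:\ell\in R_1\}$, padded with goods of $H$; agent~$2$ gets $Y$, the remaining goods of $G$, and the remaining goods of $H$. The target allocation is built the same way from $R_2$. We choose the split of $H$ so that both allocations are exactly-balanced with a common size vector.

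The first step repeats the freezing argument of \Cref{thm:NP-hard}. Consider the first exchange that moves a good of $X\cup Y$. It drops the mover's utility by at least about $3T$, which violates EF1 because $T\gg S$. Hence every exchange involves only goods of $G\cup H$.

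With $X$ and $Y$ frozen, the most valuable good in agent~$2$'s bundle from agent~$1$'s perspective is always worth $8T$. Symmetrically, agent~$1$'s most valuable good from agent~$2$'s perspective is always worth $12T$. Let $\beta$ be the total value of the goods of $G$ held by agent~$1$. The same computation as before shows:
\begin{itemize}
\item agent~$1$ is EF1 if and only if $2\beta-S\ge -p$, that is, $\beta\ge\underline{t}$;
\item agent~$2$ is EF1 if and only if $2\beta-S\le q$, that is, $\beta\le\overline{t}$.
\end{itemize}
So an allocation with $X$ and $Y$ in place is EF1 exactly when the set of indices held by agent~$1$ is feasible for \textsc{Subset Sum Reconfiguration}. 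As a sanity check, this also confirms that the initial and target allocations are EF1.

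Next comes the simulation. Exchanging $g_\ell$ for an $h\in H$ in either direction is precisely adding or removing index $\ell$. Because $H$ is large, the required partner good always exists. Hence any subset-sum reconfiguration sequence lifts directly to an EF1 exchange path, which gives the ($\Rightarrow$) direction.

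For ($\Leftarrow$), exchanges of the form $h$--$h$ do not change the state, so we can ignore them. The real difficulty, and the step I expect to be the main obstacle, is an exchange of $g_\ell$ held by agent~$1$ with $g_k$ held by agent~$2$. Such a move jumps $\beta$ by $a_k-a_\ell$ in one step. Neither ``remove then add'' nor ``add then remove'' need stay inside the window, since the intermediate values are $\beta-a_\ell$ and $\beta+a_k$.

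My first attempt to rule out these swaps would be to perturb the values to $u(g_\ell)=Ca_\ell+1$ with $C$ large. I would then tune the offsets $p$ and $q$ so that the window also pins down the cardinality of agent~$1$'s share of $G$ modulo something. This fails as stated, because add and remove steps themselves change that cardinality by one.

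My second attempt would be to give each index two goods with different roles, so that any direct $G$--$G$ swap breaks some pairing invariant and hence EF1. For example, make $g_\ell$ worth $a_\ell$ to one agent and $a_\ell+\varepsilon_\ell$ to the other, using distinct small $\varepsilon$'s. If no gadget works, the fallback is to reduce instead from a variant of \textsc{Subset Sum Reconfiguration} that allows token jumping, and to note that it is polynomially equivalent to the stated variant. Whichever route succeeds, the remaining work is routine bookkeeping: checking that every intermediate EF1 allocation projects to a feasible subset, and that consecutive projected subsets differ by a single index.
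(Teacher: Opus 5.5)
\textbf{There is a genuine gap: your construction allows cross-index swaps, so the ($\Leftarrow$) direction fails.}

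Your calibration of the heavy goods is sound and parallels the paper's. Your offsets $p,q$ play the role of $\underline{\delta},\overline{\delta}$ there, and the freezing of $X\cup Y$ goes through. The problem is the obstacle you flag yourself. With one good $g_\ell$ per index, exchanging $g_\ell$ held by agent~$1$ with $g_k$ held by agent~$2$ is an EF1 move whenever both endpoints have $\beta\in[\underline{t},\overline{t}]$. That move is a token jump, which is not a \textsc{Subset Sum Reconfiguration} move.

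Concretely, take $a_1=a_2=1$, $\underline{t}=\overline{t}=1$, $R_1=\{1\}$, $R_2=\{2\}$. This is a No-instance, since from $\{1\}$ both adding and removing an index leave the window. Yet your initial and target allocations, which are EF1 by construction, differ by the single exchange of $g_1$ and $g_2$. So your instance is a Yes-instance. None of your repairs is carried out. The $\varepsilon_\ell$ perturbation still uses one good per index, and any version of it that keeps the initial and target allocations EF1 is defeated by the same example. The claimed polynomial equivalence with a token-jumping variant is unsupported; proving it would need exactly the gadget that is missing.

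\textbf{The missing idea} is to let the EF1 window itself enforce a pairing invariant through exponentially separated weights. For each $\ell$, the paper creates two goods $g_\ell^0,g_\ell^1$, valued by both agents at $Bc_\ell$ and $Bc_\ell+a_\ell$. Here $c_\ell=2^{\ell-1}$, $C=2^r-1$, and $B=b+\overline{t}+1$ with $b=\sum_\ell a_\ell$. It then calibrates $X,Y$ so that, with $X\cup Y$ frozen, EF1 holds if and only if agent~$1$'s set $S$ of these goods satisfies $u(S)\in[BC+\underline{t},BC+\overline{t}]$.

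If agent~$1$ holds $q_\ell\in\{0,1,2\}$ goods of pair $\ell$, then $u(S)=B\bigl(C+\sum_\ell(q_\ell-1)c_\ell\bigr)+\sum_{g_\ell^1\in S}a_\ell$, where the last sum lies in $[0,b]$. So the window forces $\sum_\ell(q_\ell-1)c_\ell=0$. Since the $c_\ell$ are distinct powers of two and $q_\ell-1\in\{-1,0,1\}$, this forces $q_\ell=1$ for every $\ell$.

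Consequences of this invariant:
\begin{itemize}
\item The only admissible exchange among these $2r$ goods swaps $g_\ell^0$ and $g_\ell^1$ for a single $\ell$, which toggles exactly index $\ell$.
\item A cross-index swap, which was your obstacle, would make some $q_\ell=0$ and some $q_k=2$, so it is excluded.
\item The size of agent~$1$'s share of these goods can never change. Hence two zero-valued goods $z_1,z_2$ stay pinned with agent~$1$, giving exact balance without any padding set $H$.
\end{itemize}
Adding or removing an index is thus realized by a within-pair swap rather than by an exchange with a zero-valued good.
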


\begin{proof}
Consider any instance of \textsc{Subset Sum Reconfiguration}.
Let $b = \sum_{\ell\in R} a_\ell$ and $B = b + \overline{t} + 1$, let $c_\ell = 2^{\ell-1}$ for each $\ell\in R$, and let $C = \sum_{\ell\in R}c_\ell = 2^r - 1$.
For each $\ell\in R$, create two goods $g_\ell^0$ and $g_\ell^1$ valued identically by both agents: $u(g_\ell^0) = Bc_\ell$ and $u(g_\ell^1) = Bc_\ell + a_\ell$, where $u$ denotes the common utility function for these goods.
Let $\underline{T} = BC+\underline{t}$ and $\overline{T} = BC+\overline{t}$.
For any subset~$S$ of the $2r$ goods created thus far and any $\ell\in R$, let $q_\ell = |S\cap\{g_\ell^0, g_\ell^1\}|$, and let $f(S) = \sum_{\ell\in R}(q_\ell-1)c_\ell$.
Note that $(q_\ell-1)c_\ell$ is equal to $-c_\ell, 0, c_\ell$ when $q_\ell$ is equal to $0,1,2$, respectively.
We have 
\[
u(S) = B(C + f(S)) +\sum_{g_\ell^1\in S}a_\ell,
\]
which is an integer. 
If $f(S) \le -1$, then 
\[
u(S) \le BC-B + b = BC - \overline{t} - 1 < BC \le \underline{T}.
\]
On the other hand, if $f(S) \ge 1$, then
\[
u(S) \ge BC + B > \overline{T}.
\]
Hence, if $u(S) \in [\underline{T}, \overline{T}]$, then $f(S) = 0$.
Since the terms $c_\ell$ form a geometric sequence with ratio~$2$ and $q_\ell - 1\in\{-1,0,1\}$, this implies that $q_\ell = 1$ for all $\ell\in R$.
When $f(S) = 0$, we have $u(S) = BC + \sum_{g_\ell^1\in S} a_\ell$, and so $u(S) \in [\underline{T}, \overline{T}]$ if and only if $\sum_{g_\ell^1\in S} a_\ell \in [\underline{t}, \overline{t}]$.

Next, let $U = \sum_{\ell\in R}(u(g_\ell^0) + u(g_\ell^1)) = 2BC + b$, and let $\underline{\delta} = U - 2\underline{T}$ and $\overline{\delta} = 2\overline{T} - U$.
Let $Q = 4(|\underline{\delta}| + |\overline{\delta}| + U + 1)$.
Add eight goods in the sets $X = \{x_1,x_2\}$, $Y = \{y_1,y_2,y_3,y_4\}$, $Z = \{z_1,z_2\}$ with utilities as shown in the following table.
\begin{center} 
    \begin{tabular}{c|cccccccc}
        & $x_1$ & $x_2$ & $y_1$ & $y_2$ & $y_3$ & $y_4$ & $z_1$ & $z_2$ \\
        \hline
        $u_1$ & $6Q + \underline{\delta}$ & $6Q$ & $5Q$ & $4Q$ & $4Q$ & $4Q$ & $0$ & $0$  \\
        $u_2$ & $5Q$ & $4Q$ & $Q + \overline{\delta}$ & $Q$ & $Q$ & $Q$ & $0$ & $0$
    \end{tabular}
\end{center}
Note that all utilities are indeed nonnegative, and the total number of goods is $2r+8$.
We map any subset $R'\subseteq R$ with $\sum_{\ell\in R'}a_\ell \in [\underline{t}, \overline{t}]$ to the following allocation:
\begin{itemize}
\item agent~$1$ receives $x_1,x_2,z_1,z_2$, the goods $g_\ell^1$ for $\ell\in R'$, and the goods $g_\ell^0$ for $\ell\not\in R'$;
\item agent~$2$ receives $y_1,y_2,y_3,y_4$, the goods $g_\ell^1$ for $\ell\not\in R'$, and the goods $g_\ell^0$ for $\ell\in R'$.
\end{itemize}
In particular, each agent receives $r+4$ goods, so the allocation is exactly-balanced.
Consider any allocation such that agent~$1$ holds both goods from~$X$ but none of the goods from~$Y$.
Since $Q > U$, the utility for any of the first $2r$ goods is less than $Q$.
Therefore, agent~$1$ is EF1 if and only if $12Q + \underline{\delta} + u(S) \ge 12Q + (U - u(S))$.
This is equivalent to $2u(S) \ge U - \underline{\delta} = 2\underline{T}$, that is, $u(S) \ge \underline{T}$.
Similarly, agent~$2$ is EF1 if and only if $4Q + \overline{\delta} + (U-u(S)) \ge 4Q + u(S)$.
This is equivalent to $2u(S) \le U + \overline{\delta} = 2\overline{T}$, that is, $u(S) \le \overline{T}$.
Therefore, the initial and target allocations are EF1.
This completes the description of the \textsc{EF1 Reconfiguration} instance, which can be constructed in polynomial time; in particular, numbers up to $2^r$ can be written using a polynomial number of bits.

It remains to show that a Yes-instance of \textsc{Subset Sum Reconfiguration} corresponds to a Yes-instance of \textsc{EF1 Reconfiguration}, and vice versa.
If an allocation is such that agent~$1$ holds $x_1,x_2,z_1,z_2$ along with a set~$S$ containing exactly one of $g_\ell^0$ and $g_\ell^1$ for each $\ell\in R$, the argument above implies that the allocation is EF1 if and only if $\sum_{\ell\in R'}a_\ell\in [\underline{t}, \overline{t}]$, where $R' = \{\ell\in R \mid g_\ell^1\in S\}$.
If agent~$1$ exchanges a good from~$X$ with another good, then agent~$1$'s utility for her own bundle becomes at most $11Q + |\underline{\delta}| + U < 12Q$, while her utility for agent~$2$'s bundle even after removing any good is at least $12Q$, so EF1 is violated for agent~$1$.
Similarly, if agent~$2$ exchanges a good from~$Y$ with a good outside~$X$, then agent~$2$'s utility for her own bundle becomes at most $3Q+|\overline{\delta}|+U < 4Q$, while her utility for agent~$1$'s bundle even after removing any good is at least $4Q$, so EF1 is violated for agent~$2$.
This means that in order to maintain EF1, the goods from $X\cup Y$ cannot move.
Also, the condition $u(S) \in [\underline{T},\overline{T}]$ forces each agent to always hold exactly one of $g_\ell^0$ and $g_\ell^1$ for each $\ell\in R$.
Since the size vector of the allocation is fixed, $z_1,z_2$ must then stay with agent~$1$.
With the assignment of~$g_\ell^1$ to agent~$1$ corresponding to the inclusion of~$a_\ell$ (and the assignment of~$g_\ell^0$ to agent~$1$ corresponding to the exclusion of~$a_\ell$), this yields the desired correspondence between the two instances.
\end{proof}

\begin{corollary}
If \textsc{EF1 Reconfiguration} with $n = 2$ agents belongs to NP, then \textsc{Subset Sum Reconfiguration} also belongs to NP.
\end{corollary}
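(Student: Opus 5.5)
The corollary follows by composing the polynomial-time reduction of \Cref{thm:polytime-reduction} with the assumed NP verifier for \textsc{EF1 Reconfiguration} with $n=2$. Membership in NP is closed under polynomial-time many-one reductions. So I will check that the reduction in \Cref{thm:polytime-reduction} is a genuine many-one (Karp) reduction: it maps each \textsc{Subset Sum Reconfiguration} instance $I$, in time polynomial in $|I|$, to an \textsc{EF1 Reconfiguration} instance $f(I)$ with two agents and exactly-balanced EF1 initial and target allocations, such that $I$ is a Yes-instance if and only if $f(I)$ is.

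The plan has three steps. First, I will record that the encoding length of $f(I)$ is polynomial in $|I|$. There are $2r+8$ goods, and every utility is an integer of magnitude at most $O(2^r\cdot(b+\overline{t}))$, hence has $O(r+\log b+\log\overline{t})$ bits; this is what the proof of \Cref{thm:polytime-reduction} notes. Second, I will take a certificate $w$ for $f(I)$, which by assumption has length polynomial in $|f(I)|$ and is checkable by a polynomial-time verifier $V$. I will use $w$ itself as the certificate for $I$. The new verifier computes $f(I)$ and runs $V(f(I),w)$. Since a polynomial composed with a polynomial is a polynomial, both the certificate length and the running time remain polynomial in $|I|$. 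Third, correctness follows directly from the equivalence of Yes-instances established in \Cref{thm:polytime-reduction}.

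The only point needing care is that the equivalence argued in \Cref{thm:polytime-reduction} is stated as a correspondence of reconfiguration paths, and it must indeed hold in both directions. I need two facts from that proof. The first is that every EF1 path keeps $X\cup Y$ and $z_1,z_2$ fixed and keeps exactly one of $g^0_\ell,g^1_\ell$ with each agent. Because the size vector is fixed, any exchange therefore swaps $g^0_\ell$ with $g^1_\ell$ for a single $\ell$, which toggles $\ell$ in $R'$, that is, adds or removes one index. The second is the converse: each single-index step of the subset sum path becomes such an exchange and preserves EF1 by the established equivalence between EF1 and $\sum_{\ell\in R'}a_\ell\in[\underline{t},\overline{t}]$. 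With both directions confirmed, the conclusion is immediate.
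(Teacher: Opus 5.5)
Your proposal is correct and matches the paper, which treats the corollary as an immediate consequence of \Cref{thm:polytime-reduction}: the construction there is a polynomial-time many-one reduction, and NP is closed downward under such reductions. Your additional check fills in a detail the paper leaves as ``the desired correspondence'': since each agent always holds exactly one of $g^0_\ell, g^1_\ell$, every EF1 exchange must swap $g^0_\ell$ with $g^1_\ell$ for a single $\ell$, and so toggles exactly one index.
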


A possible approach to proving that \textsc{EF1 Reconfiguration} belongs to NP for two agents is to show that every Yes-instance admits an EF1 reconfiguration path of polynomial length, since such a path would serve as a certificate.
Although it remains unclear whether a polynomial-length path always exists, we show that a substantial detour may be unavoidable.
In particular, for two agents and $m$ goods, a shortest EF1 reconfiguration path can have length $\Theta(m^2)$, significantly exceeding the length of a shortest path when the EF1 constraint is ignored.

\begin{theorem}
\label{thm:path-length}
For any positive integer~$r$, there exists an instance with $n = 2$ agents and $4r+8$ goods, and exactly-balanced initial and target EF1 allocations, such that the minimum length of an EF1 reconfiguration path is $2r^2$.
\end{theorem}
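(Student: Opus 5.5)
The plan is to reuse the gadget from the proofs of \Cref{thm:NP-hard} and \Cref{thm:polytime-reduction}, accounting for $8$ of the $4r+8$ goods, and to spend the remaining $4r$ goods on a ``counter'' that must be wound up and down repeatedly. The frame goods are $X=\{x_1,x_2\}$, $Y=\{y_1,\dots,y_4\}$ and $Z=\{z_1,z_2\}$. As before, $X$ and $Y$ get very large values with asymmetric perturbations $\underline{\delta},\overline{\delta}$, so that any exchange touching $X\cup Y$ breaks EF1. The goods in $Z$ are zero-valued fillers that fix the exactly-balanced size vector.

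With $X\cup Y$ frozen, EF1 should reduce, exactly as in \Cref{thm:NP-hard}, to a window condition $-c\le\Delta\le c$. Here $\Delta$ is the difference between the two agents' utilities for the non-frame goods they hold. Both agents value the $4r$ remaining goods identically, and the constant $c$ is tuned by $\underline{\delta},\overline{\delta}$.

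For the $4r$ goods I would take $r$ heavy goods $h_1,\dots,h_r$ of common value $r$, $r$ unit goods of value $1$, and $2r$ zero goods. In $\cA$, agent~$1$ holds the heavy goods and agent~$2$ the unit goods; in $\cB$ these roles are reversed, and the zero goods are split so both allocations are exactly-balanced and EF1. The window $c$ should be chosen slightly larger than $r$, say $c=r$ with a suitable parity offset. Then moving one heavy good changes $\Delta$ by about $2r$, which is legal only when $\Delta$ sits at the appropriate end of the window. Moving a unit good shifts $\Delta$ by $2$ and moving a zero good shifts it by $0$. The constants need care so that exchanging a heavy good directly with a unit good is forbidden, while exchanging a heavy good with a zero good is allowed only from an extreme value of $\Delta$.

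For the upper bound I would exhibit an explicit path. Before each heavy good is moved, $\Delta$ must be brought to the required extreme, which takes $r$ unit-for-zero exchanges. Then the heavy good is exchanged for a zero good, and then the units must be sent back or forward so the next heavy move can occur, another $r$ exchanges. With about $2r$ exchanges per heavy good, the total is $2r^2$. The constants must be arranged so this count comes out exactly $2r^2$, including the boundary phases at the start and end, possibly by placing the unit goods so that the first and last phases are shorter.

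For the lower bound I would use a potential argument. Between two consecutive heavy-good moves, $\Delta$ must travel across the entire window, from the value just after one heavy move to the value required before the next. Since unit exchanges change $\Delta$ by at most $2$, this forces at least about $r$ unit exchanges per crossing. Summing over the $r$ heavy moves and the two directions per cycle gives the $2r^2$ bound.

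The main obstacle is making the lower bound tight and rigorous. I must rule out shortcuts such as moving several heavy goods ``together'', exchanging a heavy good for a unit good, or exploiting intermediate allocations whose EF1 witness is a removed heavy good rather than a frame good. I would first check that the frame makes $\max$-good removal always hit $x_1$ or $y_1$, so that the window characterization is exact. Only then would I tune the parity of $c$ so that each heavy move is forced to occur from an extreme of the window.
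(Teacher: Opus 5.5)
There is a genuine gap: the value accounting in your construction is inconsistent. Once $X\cup Y$ is frozen, EF1 is equivalent to $\beta$, the common value of the non-frame goods held by agent~$1$, lying in a fixed interval (your window). In your $\cA$, agent~$1$ holds the $r$ heavy goods plus zero goods, so $\beta=r^2$. In your $\cB$ she holds the $r$ unit goods plus zero goods, so $\beta=r$. Both endpoints must be EF1, so the window must contain $[r,r^2]$. In your notation this means $c\ge r^2-r$, not $c\approx r$.

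With such a window, nothing forces a detour. Exchange the heavy goods one at a time for unit goods; after $k$ swaps $\beta=r^2-k(r-1)$, which stays in $[r,r^2]$. So an EF1 path of length $r$ exists, plus at most $r$ value-neutral zero-for-zero swaps if the zero goods are split differently. That is $O(r)$, not $2r^2$.

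If you instead insist on a window of width about $r$, the ``counter'' still cannot work. All $r$ heavy goods must travel from agent~$1$ to agent~$2$, a net decrease of $r^2$ in $\beta$. The unit goods can compensate by at most $r$ in total. Rewinding the units between heavy moves undoes the compensation rather than accumulating it. Your lower-bound sketch, and the exact count $2r^2$ you leave to unspecified tuning, therefore have nothing to rest on.

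The missing idea has two parts. First, the goods that must change sides should flow in \emph{both} directions with zero net value. Second, the values should be chosen so that EF1 permits only small swaps.

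The paper does this with $4r$ goods $g_\ell$ of common value $c+\ell$ for $\ell\in\{0,\dots,4r-1\}$. Here $b=4r^2-r$ is half the index sum and $c=b+2$. The large offset $c$ forces agent~$1$ to hold exactly $2r$ of these goods, since otherwise $u(S)$ falls outside $[P/2,P/2+1]$. The window then forces $\sum_{\ell\in S}\ell\in\{b,b+1\}$, so every exchange replaces some index $\ell$ by $\ell\pm1$.

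The initial index set is $\{0,\dots,r-1\}\cup\{3r,\dots,4r-1\}$ and the target is $\{r,\dots,3r-1\}$; both have index sum $b$. The total index displacement between them gives the lower bound $2r^2$. An explicit schedule attains it by alternating a $+1$ step on a low index with a $-1$ step on a high index.
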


\begin{proof}
Fix a positive integer~$r$, and let $R = \{0,1,\dots,4r-1\}$, $b = 4r^2 - r = \frac{1}{2}\sum_{\ell \in R} \ell$, and $c = b+2$.
For each $\ell\in R$, create a good~$g_\ell$ such that $u_1(g_\ell) = u_2(g_\ell) = c + \ell$.
Each agent's total utility for these $4r$ goods is $P \coloneqq 4rc + 2b$.
Let $Q = 10(P+2)$.
Add eight goods in the sets $X = \{x_1,x_2\}$, $Y = \{y_1,y_2,y_3,y_4\}$, $Z = \{z_1,z_2\}$ with utilities as shown in the following table.
\begin{center} 
    \begin{tabular}{c|cccccccc}
        & $x_1$ & $x_2$ & $y_1$ & $y_2$ & $y_3$ & $y_4$ & $z_1$ & $z_2$ \\
        \hline
        $u_1$ & $3Q$ & $3Q$ & $2Q$ & $2Q$ & $2Q$ & $2Q$ & $0$ & $0$  \\
        $u_2$ & $5Q$ & $4Q$ & $Q + 2$ & $Q$ & $Q$ & $Q$ & $0$ & $0$
    \end{tabular}
\end{center}
There are a total of $4r+8$ goods.
For a set $S\subseteq R$ of size~$2r$, let $A_1(S) = \{x_1,x_2,z_1,z_2\}\cup\{g_\ell\mid \ell\in S\}$ and $A_2(S) = \{y_1,y_2,y_3,y_4\} \cup \{g_\ell\mid \ell\not\in S\}$, so the allocation $\cA(S) = (A_1(S), A_2(S))$ is exactly-balanced for any~$S$.
Let $S_0 = \{0,\dots,r-1\}\cup\{3r,\dots,4r-1\}$, and let the initial and target allocations be $\cA(S_0)$ and $\cA(R\setminus S_0)$, respectively.

Consider any allocation $\cA = (A_1,A_2)$ such that $x_1,x_2$ are with agent~$1$ while $y_1,y_2,y_3,y_4$ are with agent~$2$, and let $S$ be the subset of indices $\ell\in\{0,\dots,4r-1\}$ (not necessarily of size~$2r$) such that~$g_\ell$ is held by agent~$1$.
Let $u(S) = c|S| + \sum_{\ell\in S}\ell$ be the common utility that both agents have for the corresponding goods.
Agent~$1$'s utility for her own bundle is $6Q + u(S)$, while her utility for agent~$2$'s bundle after removing a most valuable good is $6Q + P - u(S)$.
Hence, agent~$1$ is EF1 exactly when $2u(S) \ge P$.
Similarly, agent~$2$'s utility for her own bundle is $4Q+2+P-u(S)$, while her utility for agent~$1$'s bundle after removing the most valuable good is $4Q + u(S)$.
Hence, agent~$2$ is EF1 exactly when $2u(S) \le P+2$.
This means that $\cA$ is EF1 if and only if $u(S) \in [P/2, P/2 + 1]$.
Now, $P/2 = 2rc + b$.
If $|S| \le 2r-1$, then
\[
u(S) \le (2r-1)c + 2b = 2rc + b - 2 < \frac{P}{2}.
\]
On the other hand, if $|S| \ge 2r+1$, then
\[
u(S) \ge (2r+1)c = 2rc + b + 2 > \frac{P}{2} + 1.
\]
Therefore, if $u(S) \in [P/2, P/2 + 1]$, then $|S| = 2r$.
This implies that for any EF1 allocation~$\cA$ with the assignment of $x_1,x_2,y_1,y_2,y_3,y_4$ fixed as above, we must have $|S| = 2r$.
If this allocation is on a reconfiguration path between $\cA(S_0)$ and $\cA(R\setminus S_0)$, the size vector forces $z_1,z_2$ to be with agent~$1$.
When $|S| = 2r$, the condition $u(S) \in [P/2, P/2+1]$ becomes $\sum_{\ell\in S}\ell \in \{b, b+1\}$.
Since $\sum_{\ell\in S_0}\ell = \sum_{\ell\in R\setminus S_0}\ell = b$, both the initial allocation $\cA(S_0)$ and the target allocation $\cA(R\setminus S_0)$ are EF1.

We now consider any EF1 reconfiguration path between the initial and target allocations.
We claim that none of the goods in $X\cup Y$ can move.
Assume otherwise, and consider the first exchange involving at least one of these goods.
If agent~$1$ exchanges a good from~$X$ with another good, then agent~$1$'s utility for her own bundle becomes at most $5Q + P < 6Q$, while her utility for agent~$2$'s bundle even after removing any good is at least $6Q$, so EF1 is violated for agent~$1$.
Similarly, if agent~$2$ exchanges a good from~$Y$ with a good outside~$X$, then agent~$2$'s utility for her own bundle becomes at most $3Q+2+P < 4Q$, while her utility for agent~$1$'s bundle even after removing any good is at least~$4Q$, so EF1 is violated for agent~$2$.
Hence, none of the goods in $X\cup Y$ can move.
By our argument in the previous paragraph, in order for EF1 to be maintained, it must always hold that $|S| = 2r$ and $\sum_{\ell\in S}\ell \in \{b, b+1\}$.
Also, since the size vector is fixed, $z_1$ and $z_2$ must stay with agent~$1$.

Initially, $S_0 = \{0,\dots,r-1\}\cup\{3r,\dots,4r-1\}$.
In order to maintain $\sum_{\ell\in S}\ell \in \{b, b+1\}$, in each exchange, we can only replace some~$\ell$ by either $\ell - 1$ or $\ell+1$.
Hence, to arrive at $R\setminus S_0 = \{r,\dots,3r-1\}$, the number of exchanges required is at least
\begin{align*}
|r-0| + \dots + |(2r-1)-(r-1)| + |2r-3r| + \dots + |(3r-1) - (4r-1)| 
= 2r^2.
\end{align*}
It remains to show that $2r^2$ exchanges suffice.
For each $(t, t')\in\{0,\dots,r-1\}\times \{0,\dots,r-1\}$ in increasing order of~$t$ and then of~$t'$, we replace $t+r-t'-1$ with $t+r-t'$ and then replace $3r-t+t'$ with $3r-t+t'-1$.
For each $(t, t')$, this increases $\sum_{\ell\in S}\ell$ from $b$ to $b+1$ and then decreases it back to~$b$.
Therefore, the allocation remains EF1 throughout, and after $2r^2$ exchanges, the target allocation~$\cA(R\setminus S_0)$ is reached.
\end{proof}

\section{Exchanges and Transfers}
\label{sec:transfers}

In this section, we turn to a setting in which the allowed operations include not only an exchange, but also a \emph{transfer} of a good from one agent to another.
We refer to this as the \emph{exchange-and-transfer setting}.
With transfers, allocations no longer need to have the same size vector in order to be reachable from one another; we say that two allocations are \emph{adjacent} if they can be reached from each other via either an exchange or a transfer.
\citet[App.~A]{IgarashiKaSu24} initiated the study of this model and showed that allowing transfers leads to stronger positive results than in the exchange-only setting.
We shall delve further into this setting by exploring both its possibilities and limitations.

To begin with, \citet{IgarashiKaSu24} proved that when there are either two agents with identical utilities or any number of agents with identical binary utilities, an EF1 reconfiguration path exists between any two EF1 allocations in this setting.
We strengthen both results simultaneously by proving the same guarantee for any number of agents with identical utilities.
This stands in stark contrast to the exchange-only setting, where these authors showed that an EF1 reconfiguration path may fail to exist even for three agents.

\begin{theorem}
\label{thm:transfer-identical}
In the exchange-and-transfer setting, for an instance with any number of agents having identical utilities and any initial and target EF1 allocations, there exists an EF1 reconfiguration path.
\end{theorem}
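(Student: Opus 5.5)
We plan to connect every EF1 allocation to a single canonical EF1 allocation, so that any two EF1 allocations are connected through it. Since reconfiguration paths are reversible, this suffices. The canonical allocation $\mathcal{C}$ is the output of a fixed procedure for identical utilities: sort the goods as $g_1,\dots,g_m$ in nonincreasing order of $u$ (ties broken by a fixed order) and run round-robin with a fixed agent order. With identical utilities, any EF1 allocation can be described by sorting each bundle and looking at bundle values. EF1 for identical utilities is the condition
\[
\min_i u(A_i) \ge \max_j \bigl(u(A_j) - \max_{g\in A_j} u(g)\bigr).
\]
That is, the poorest agent's value is at least every other bundle's value after removing its largest good. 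This condition involves only bundle values and top goods, which makes local moves easy to control.

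The key steps are as follows, carried out in order.

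\textbf{Step 1: normalize the top goods.} Let $P$ be the set of the $n$ most valuable goods. We first move to an EF1 allocation in which each agent holds exactly one good of $P$ as her most valuable good. If some agent $j$ holds two goods of $P$, some agent $i$ holds none. We exchange a non-top good of $j$, or transfer one of $j$'s $P$-goods to $i$, choosing whichever keeps EF1.

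\textbf{Step 2: descend by a potential argument.} Repeatedly take an agent $j$ of maximum value $u(A_j)-\max_{g\in A_j} u(g)$ and an agent $i$ of minimum value $u(A_i)$. Transfer from $j$ to $i$ the smallest good of $A_j$ that is not $j$'s top good. Removing goods from the richest reduced bundle and adding to the poorest bundle cannot violate the inequality above, except possibly for $i$'s new reduced bundle. For $i$ we argue as follows. Her reduced value after receiving $g$ is at most $u(A_i)$, if $g$ becomes her top good, or $u(A_i)-\max(A_i)+u(g)$ otherwise. In both cases this is at most the old maximum reduced value. We will use a lexicographic potential, for example the sorted vector of bundle values, to show this process terminates at a state where no such transfer changes anything. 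We will then show that such terminal states are exactly ``balanced'' allocations. Finally, we show that all balanced allocations with the same multiset structure are connected by exchanges of equal-valued or adjacent goods, in the same spirit as the induction in \Cref{thm:recursively-balanced-general}, and hence are connected to $\mathcal{C}$.

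\textbf{Main obstacle.} The obstacle is Step 2's terminal analysis. Transfers change the size vector, and two allocations can both be ``stuck'' while being different. We expect to handle this with an induction on $m$ instead. Remove the globally least valuable good $g_m$: first move it by a single transfer to an agent of minimum value, and check that this preserves EF1 since $g_m$ is never needed as a removed top good except in trivial cases. Then apply the induction hypothesis to the remaining goods, keeping $g_m$ fixed with some agent. Giving the least good to a minimum-value agent along the path may break minimality along the way. We would then re-transfer $g_m$ to the current poorest agent whenever needed, and verify that each such transfer preserves EF1 because $u(g_m)$ is at most every other good's value. The base cases $m\le n$ are direct.
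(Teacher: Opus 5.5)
Your Steps~1 and~2 are never completed: the terminal states are left uncharacterized, and \Cref{thm:recursively-balanced-general} would only give EF2 paths anyway. So the argument rests on the induction in your last paragraph, and that induction fails at both of its joints.

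\textbf{Deleting $g_m$.} The induction hypothesis needs an EF1 allocation of $M\setminus\{g_m\}$. Deleting the least valuable good from an EF1 allocation need not leave one, and no placement of $g_m$ alone fixes this. Take two agents with $u(x)=u(b)=3$, $u(a)=2$, $u(s)=1$, and the allocation $(\{a,s\},\{x,b\})$. It is EF1, and $s$ already sits with the poorest agent. Moving $s$ to agent~2 violates EF1, and in either case the restriction $(\{a\},\{x,b\})$ is not EF1. So other goods must be rearranged before the deletion. The paper does this by removing a \emph{maximum}-value good $g$ instead. It first reaches a ``$g$-ready'' allocation, in which the owner $i$ of $g$ satisfies $u(A_i\setminus\{g\})=\max_\ell \rho(A_\ell)$, where $\rho$ denotes the value after removing a top good. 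To get there, it either exchanges $g$ with the top good $h$ of a bundle attaining this maximum, or transfers $h$ to $i$ and repeats.

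\textbf{Lifting the path.} Your lifting rule does not preserve EF1: you follow the path for $M\setminus\{g_m\}$ with $g_m$ parked and re-transfer it to the current poorest agent. The remark that $u(g_m)$ is at most every other good's value does not justify it. Let $u(a)=u(c)=u(h)=2$ and $u(s)=1$. Consider the EF1 step $(\{a\},\{c,h\})\to(\{a,h\},\{c\})$, whose lifts are $(\{a,s\},\{c,h\})$ and $(\{a,h\},\{c,s\})$. Transferring $h$ with $s$ parked yields $(\{a,s,h\},\{c\})$, and moving $s$ first yields $(\{a\},\{c,h,s\})$; neither is EF1. What is needed is to merge the two moves into the single exchange of $s$ and $h$.

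This merging is the core of the paper's proof. For adjacent EF1 allocations $\cA,\cB$ of $M\setminus\{g\}$, with minimum-utility agents $i$ of $\cA$ and $j$ of $\cB$, the lifts $\cA^{(i)}$ and $\cB^{(j)}$ are joined by at most three EF1 steps. If a common minimum-utility agent exists, $g$ is routed through it. Otherwise $g$ is exchanged with the good leaving the new poorest agent~$j$, followed where necessary by a transfer and, in one subcase, an extra exchange. The verification repeatedly uses $u(g)\ge u(h)$, so that the agent receiving $g$ in the exchange loses no utility. With the least good this is reversed, so you would need to supply a genuinely different case analysis, not just a remark that $u(g_m)$ is small.
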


\begin{proof}
Let $u$ denote the common utility function of the agents.
For any bundle~$B$, let $\rho(B)$ denote the utility of~$B$ after removing a highest-value good (if $B = \emptyset$, let $\rho(B) = 0$).
For any allocation~$\cA$, let $L(\cA) = \min_{i\in N} u(A_i)$ and $R(\cA) = \max_{i\in N} \rho(A_i)$.
Then, an allocation~$\cA$ satisfies EF1 if and only if $L(\cA) \ge R(\cA)$.
Fix a good~$g$ of maximum utility.
We say that an allocation~$\cA$ is \emph{$g$-ready} if the owner~$i$ of~$g$ satisfies $\rho(A_i) = u(A_i\setminus\{g\}) = R(\cA)$.
Observe that if an EF1 allocation $\cA$ is $g$-ready, then removing~$g$ still leaves an EF1 allocation.
Indeed, every agent still has utility at least $R(\cA)$, and each agent's utility after removing her highest-value good is at most $R(\cA)$.
Moreover, after removing~$g$, agent~$i$ has the lowest utility among all agents.

We claim that from any EF1 allocation $\cA$, it is possible to reach a $g$-ready allocation via an EF1 reconfiguration path.
Let agent~$i$ be the owner of~$g$ in~$\cA$.
If $u(A_i\setminus\{g\}) = R(\cA)$, the allocation~$\cA$ is already $g$-ready.
Else, $u(A_i\setminus\{g\}) < R(\cA)$; let $j$ be an agent such that $\rho(A_j) = R(\cA)$, and let $h$ be a highest-value good in~$A_j$, so that $u(A_j\setminus\{h\}) = R(\cA)$.
We consider two cases.
\begin{itemize}
\item \underline{Case 1}: $u(A_i\setminus\{g\}) + u(h) \ge R(\cA)$.
We exchange $g$ and $h$, and call the new allocation~$\cB$.
It holds that $\rho(B_j) = u(B_j\setminus\{g\}) = R(\cA)$, $u(B_i) = u(A_i\setminus\{g\}) + u(h) \ge R(\cA)$, and $\rho(B_i) \le u(B_i\setminus\{h\}) = u(A_i\setminus\{g\}) < R(\cA)$.
Hence, $L(\cB) \ge R(\cA) \ge R(\cB)$, and so $\cB$ satisfies EF1.
Moreover, $\cB$ is $g$-ready.

\item \underline{Case 2}: $u(A_i\setminus\{g\}) + u(h) < R(\cA)$.
We transfer $h$ from~$j$ to~$i$, and call the new allocation~$\cB$.
It holds that $u(B_j) = R(\cA)$ and $\rho(B_i) = u(B_i\setminus\{g\}) = u(A_i\setminus\{g\}) + u(h) < R(\cA)$.
Again, $L(\cB) \ge R(\cA) \ge R(\cB)$, and so $\cB$ satisfies EF1.
We then repeat the process for~$\cB$ instead of~$\cA$, noting that agent~$i$ remains the owner of~$g$.
Since Case~2 involves transferring a good to~$i$, it can only happen a finite number of times, and the process must eventually stop with a $g$-ready allocation.
\end{itemize}

Observe that given an EF1 allocation, assigning an extra good to an agent with the minimum utility preserves EF1.\footnote{This can be viewed as a step in the envy cycle elimination algorithm for identical utilities \citep{LiptonMaMo04}.}
For an EF1 allocation~$\cA$ of $M\setminus\{g\}$ and an agent~$i$ with the minimum utility in~$\cA$, we write $\cA^{(i)}$ for the allocation resulting from assigning $g$ to~$i$.
Let $\cA,\cB$ be adjacent\footnote{Recall the definition from the beginning of this section.} EF1 allocations of $M\setminus\{g\}$.
We claim that any $\cA^{(i)}$ and $\cB^{(j)}$---which are necessarily EF1---can be connected via an EF1 reconfiguration path of length at most~$3$.
If some agent~$i^*$ is a minimum-utility agent in both $\cA$ and $\cB$, we transfer $g$ from~$i$ to~$i^*$ (if $i\ne i^*$), perform the operation connecting $\cA$ to~$\cB$, and transfer~$g$ from~$i^*$ to~$j$ (if $i^* \ne j$); both $\cA^{(i^*)}$ and $\cB^{(i^*)}$ are EF1 by the earlier observation.
Else, no agent is a minimum-utility agent in both $\cA$ and $\cB$, so $i\ne j$.
Since $i$ is a minimum-utility agent in~$\cA$, we have $L(\cA) = u(A_i)$; similarly, $L(\cB) = u(B_j)$.
Assume that $L(\cA) \ge L(\cB)$; the opposite case can be handled symmetrically.
The unique minimum-utility agent in~$\cB$ must be the agent whose utility decreases from~$\cA$ to~$\cB$.

Consider the operation that transforms $\cA$ into~$\cB$.
Suppose first that it is a transfer of a good~$h$ from agent~$i'$ to agent~$j'$.
Recall that $\cA$ and $\cB$ are allocations of $M\setminus\{g\}$, so $h\ne g$.
Since $i'$ is the unique minimum-utility agent in~$\cB$, we have $i' = j$.
If $i = j'$, then from~$\cA^{(i)}$, we exchange $g$ and~$h$ and immediately arrive at $\cB^{(j)}$.
Suppose now that $i\not\in\{j,j'\}$.
From~$\cA^{(i)}$, we exchange $g$ and $h$ and then transfer $h$ to agent~$j'$, arriving at $\cB^{(j)}$ (see \Cref{fig:transfer-case-theorem-5-1}).
We argue that the intermediate allocation~$\cC$ is EF1.
Since every agent receives at least as much utility in~$\cC$ as in $\cA$ (i.e., $\cA^{(i)}$ without~$g$) and agent~$i$ is a minimum-utility agent in~$\cA$, we have $L(\cC) \ge L(\cA) = u(A_i)$.
Since $\cA$ is EF1 and $C_\ell = A_\ell$ for all $\ell\not\in\{i,j\}$, it holds that $\rho(C_\ell) = \rho(A_{\ell}) \le L(\cA)$.
Also, $\rho(C_i) = \rho(A_i\cup\{h\}) \le u(A_i) = L(\cA)$, and since $g$ is a highest-value good and agent~$j$ is a minimum-utility agent in~$\cB$, we have $\rho(C_j) = \rho(B_j\cup\{g\}) = u(B_j) \le u(B_i) = u(A_i)$.
It follows that $L(\cC) \ge u(A_i) \ge R(\cC)$, and $\cC$ is EF1.

\begin{figure}[t]
\centering
\begin{tikzpicture}[
    font=\small,
    >=stealth,
    panel/.style={
        draw,
        rounded corners,
        thick,
        minimum width=2.25cm,
        minimum height=2.8cm,
        inner sep=2pt
    }
]

\node[panel] (P1) at (0,0) {};
\node at ($(P1.north)+(0,-0.35)$) {$\cA^{(i)}$};

\node[align=center, inner sep=0pt] at ($(P1.center)+(0,-0.30)$) {%
    $A^{(i)}_i:\ \textcolor{red}{g}$\\[4pt]
    $A^{(i)}_j:\ \textcolor{blue}{h}$\\[4pt]
    $A^{(i)}_{j'}:\ -$
};

\node[panel] (P2) at (3.8,0) {};
\node at ($(P2.north)+(0,-0.35)$) {$\cC$};

\node[align=center, inner sep=0pt] at ($(P2.center)+(0,-0.30)$) {%
    $C_i:\ \textcolor{blue}{h}$\\[4pt]
    $C_j:\ \textcolor{red}{g}$\\[4pt]
    $C_{j'}:\ -$
};

\node[panel] (P3) at (7.60,0) {};
\node at ($(P3.north)+(0,-0.35)$) {$\cB^{(j)}$};

\node[align=center, inner sep=0pt] at ($(P3.center)+(0,-0.30)$) {%
    $B^{(j)}_i:\ -$\\[4pt]
    $B^{(j)}_j:\ \textcolor{red}{g}$\\[4pt]
    $B^{(j)}_{j'}:\ \textcolor{blue}{h}$
};

\draw[->,very thick]
    ($(P1.east)+(0.04,0)$) -- ($(P2.west)+(-0.04,0)$);
\node[align=center] at (1.9,0.6)
    {exchange\\ $\textcolor{red}{g}$ and $\textcolor{blue}{h}$};

\draw[->,very thick]
    ($(P2.east)+(0.04,0)$) -- ($(P3.west)+(-0.04,0)$);
\node[align=center] at (5.7,0.6)
    {transfer\\ $\textcolor{blue}{h}$ to $j'$};

\end{tikzpicture}
\caption{Illustration of the transfer case in the proof of \Cref{thm:transfer-identical} (subcase $i\notin\{j,j'\}$; recall that $j = i'$). 
Each good different from $g$ and $h$ belongs to the same agent in all three allocations and is not displayed.
Starting from~$\cA^{(i)}$, we exchange $g$ and $h$ to obtain $\cC$, and then transfer $h$ to agent~$j'$ to obtain $\cB^{(j)}$.}
\label{fig:transfer-case-theorem-5-1}
\end{figure}

Next, suppose that the operation that transforms $\cA$ into~$\cB$ is an exchange between a good~$h_1$ of agent~$i'$ and a good~$h_2$ of agent~$j'$.
If $u(h_1) = u(h_2)$, then $\cA$ and $\cB$ share a minimum-utility agent, a contradiction, so $u(h_1) \ne u(h_2)$.
Assume without loss of generality that $u(h_1) < u(h_2)$.
Thus, the utility of agent~$j'$ decreases, which means that $j'$ is the unique minimum-utility agent in~$\cB$, so $j' = j$.
\begin{itemize}
\item \underline{Case 1}: $i = i'$.
From~$\cA^{(i)}$, we exchange $g$ and $h_2$ and then transfer $h_1$ to agent~$j$, arriving at~$\cB^{(j)}$ (see \Cref{fig:exchange-case1-theorem-5-1}).
Consider the intermediate allocation~$\cC$.
We have $u(C_i) = u(A_i\cup\{h_2\}) \ge u(A_i) = L(\cA)$ and $u(C_j) = u(A_j\setminus\{h_2\}) + u(g) \ge u(A_j) \ge L(\cA)$.
Also, $\rho(C_i) \le u(C_i\setminus\{h_2\}) = u(A_i) = L(\cA)$ and $\rho(C_j) = u(A_j\setminus\{h_2\}) \le u(B_j) = L(\cB) \le L(\cA)$.
Hence, $\cC$ is EF1.

\begin{figure}[t]
\centering
\begin{tikzpicture}[
    font=\small,
    >=stealth,
    panel/.style={
        draw,
        rounded corners,
        thick,
        minimum width=2.3cm,
        minimum height=2.2cm,
        inner sep=2pt
    }
]

\node[panel] (P1) at (0,0) {};
\node at ($(P1.north)+(0,-0.35)$) {$\cA^{(i)}$};

\node[align=center, inner sep=0pt] at ($(P1.center)+(0,-0.25)$) {%
    $A^{(i)}_i:\ \textcolor{red}{g},\ \textcolor{blue}{h_1}$\\[4pt]
    $A^{(i)}_j:\ \textcolor{violet}{h_2}$
};

\node[panel] (P2) at (4.2,0) {};
\node at ($(P2.north)+(0,-0.35)$) {$\cC$};

\node[align=center, inner sep=0pt] at ($(P2.center)+(0,-0.25)$) {%
    $C_i:\ \textcolor{blue}{h_1},\ \textcolor{violet}{h_2}$\\[4pt]
    $C_j:\ \textcolor{red}{g}$
};

\node[panel] (P3) at (8.4,0) {};
\node at ($(P3.north)+(0,-0.35)$) {$\cB^{(j)}$};

\node[align=center, inner sep=0pt] at ($(P3.center)+(0,-0.25)$) {%
    $B^{(j)}_i:\ \textcolor{violet}{h_2}$\\[4pt]
    $B^{(j)}_j:\ \textcolor{red}{g},\ \textcolor{blue}{h_1}$
};

\draw[->,very thick]
    ($(P1.east)+(0.04,0)$) -- ($(P2.west)+(-0.04,0)$);
\node[align=center] at (2.1,0.6)
    {exchange\\ $\textcolor{red}{g}$ and $\textcolor{violet}{h_2}$};

\draw[->,very thick]
    ($(P2.east)+(0.04,0)$) -- ($(P3.west)+(-0.04,0)$);
\node[align=center] at (6.3,0.6)
    {transfer\\ $\textcolor{blue}{h_1}$ to $j$};

\end{tikzpicture}
\caption{Illustration of Case~1 (i.e., $i = i'$; recall that $j = j'$) of the exchange case in the proof of \Cref{thm:transfer-identical}. 
Each good different from $g$, $h_1$, and $h_2$ belongs to the same agent in all three allocations and is not displayed.
Starting from $\cA^{(i)}$, we exchange $g$ and $h_2$ to obtain $\cC$, and then transfer $h_1$ to agent~$j$ to obtain $\cB^{(j)}$.}
\label{fig:exchange-case1-theorem-5-1}
\end{figure}

\item \underline{Case 2}: $i\not\in\{i',j\}$. 
In this case, from~$\cA^{(i)}$, we exchange $g$ and $h_2$, then exchange $h_2$ and $h_1$, and finally transfer $h_1$ to agent~$j$, arriving at $\cB^{(j)}$ (see \Cref{fig:exchange-case2-theorem-5-1}).
Consider the respective intermediate allocations $\cC$ and $\cD$.

For $\cC$, we have $u(C_i) \ge u(A_i) = L(\cA)$ and $u(C_j) \ge u(A_j) \ge u(A_i) = L(\cA)$.
Moreover, $\rho(C_i) = \rho(A_i\cup\{h_2\}) \le u(A_i) = L(\cA)$ and $\rho(C_j) = u(A_j\setminus\{h_2\}) = u(B_j\setminus\{h_1\}) \le u(B_j) = L(\cB) \le L(\cA)$.
Hence, $\cC$ is EF1.

For~$\cD$, the bundle of agent~$j$ is unchanged from~$\cC$, and we have $u(D_i) \ge u(A_i) = L(\cA)$ and $u(D_{i'}) > u(A_{i'}) \ge u(A_i)$.
Moreover, $\rho(D_i) = \rho(A_i\cup\{h_1\}) \le u(A_i) = L(\cA)$ and $\rho(D_{i'}) = \rho(B_{i'}) \le L(\cB) \le L(\cA)$.
Hence, $\cD$ is EF1 as well.
\end{itemize}
It follows that $\cA^{(i)}$ and $\cB^{(j)}$ can be connected via an EF1 reconfiguration path of length at most~$3$.

\begin{figure}[t]
\centering
\begin{tikzpicture}[
    font=\small,
    >=stealth,
    panel/.style={
        draw,
        rounded corners,
        thick,
        minimum width=2.3cm,
        minimum height=2.8cm,
        inner sep=2pt
    }
]

\node[panel] (P1) at (0,0) {};
\node at ($(P1.north)+(0,-0.35)$) {$\cA^{(i)}$};

\node[align=center, inner sep=0pt] at ($(P1.center)+(0,-0.30)$) {%
    $A^{(i)}_i:\ \textcolor{red}{g}$\\[4pt]
    $A^{(i)}_{i'}:\ \textcolor{blue}{h_1}$\\[4pt]
    $A^{(i)}_j:\ \textcolor{violet}{h_2}$
};

\node[panel] (P2) at (4,0) {};
\node at ($(P2.north)+(0,-0.35)$) {$\cC$};

\node[align=center, inner sep=0pt] at ($(P2.center)+(0,-0.30)$) {%
    $C_i:\ \textcolor{violet}{h_2}$\\[4pt]
    $C_{i'}:\ \textcolor{blue}{h_1}$\\[4pt]
    $C_j:\ \textcolor{red}{g}$
};

\node[panel] (P3) at (8,0) {};
\node at ($(P3.north)+(0,-0.35)$) {$\cD$};

\node[align=center, inner sep=0pt] at ($(P3.center)+(0,-0.30)$) {%
    $D_i:\ \textcolor{blue}{h_1}$\\[4pt]
    $D_{i'}:\ \textcolor{violet}{h_2}$\\[4pt]
    $D_j:\ \textcolor{red}{g}$
};

\node[panel] (P4) at (12,0) {};
\node at ($(P4.north)+(0,-0.35)$) {$\cB^{(j)}$};

\node[align=center, inner sep=0pt] at ($(P4.center)+(0,-0.30)$) {%
    $B^{(j)}_i:\ -$\\[4pt]
    $B^{(j)}_{i'}:\ \textcolor{violet}{h_2}$\\[4pt]
    $B^{(j)}_j:\ \textcolor{red}{g},\ \textcolor{blue}{h_1}$
};

\draw[->,very thick]
    ($(P1.east)+(0.04,0)$) -- ($(P2.west)+(-0.04,0)$);
\node[align=center] at (2,0.6)
    {exchange\\ $\textcolor{red}{g}$ and $\textcolor{violet}{h_2}$};

\draw[->,very thick]
    ($(P2.east)+(0.04,0)$) -- ($(P3.west)+(-0.04,0)$);
\node[align=center] at (6,0.6)
    {exchange\\ $\textcolor{violet}{h_2}$ and $\textcolor{blue}{h_1}$};

\draw[->,very thick]
    ($(P3.east)+(0.04,0)$) -- ($(P4.west)+(-0.04,0)$);
\node[align=center] at (10,0.6)
    {transfer\\ $\textcolor{blue}{h_1}$ to $j$};

\end{tikzpicture}
\caption{Illustration of Case~2 (i.e., $i\notin\{i',j\}$; recall that $j = j'$) of the exchange case in the proof of \Cref{thm:transfer-identical}. 
Each good different from $g$, $h_1$, and $h_2$ belongs to the same agent in all four allocations and is not displayed.
Starting from $\cA^{(i)}$, we exchange $g$ and $h_2$ to obtain $\cC$, then exchange $h_2$ and $h_1$ to obtain $\cD$, and finally transfer $h_1$ to agent~$j$ to obtain $\cB^{(j)}$.}
\label{fig:exchange-case2-theorem-5-1}
\end{figure}

We are now ready to prove the theorem.
The proof proceeds by induction on the number of goods.
If there is one good, the statement holds trivially.
Assume that the statement is true for $m-1$ goods, and let there be $m$~goods.
Consider any initial allocation~$\cA$ and target allocation~$\cB$.
First, we turn $\cA$ and~$\cB$ into $g$-ready allocations $\cA^*$ and $\cB^*$, respectively, via EF1 reconfiguration paths.
Removing $g$ leaves allocations $\cA^-$ and $\cB^-$, both of which are also EF1 since $\cA^*$ and $\cB^*$ are $g$-ready.
If $\cA^- = \cB^-$, then $\cA^*$ and $\cB^*$ are either identical or can reach each other via a single transfer.
Assume that $\cA^- \ne \cB^-$.
By the induction hypothesis, there exists an EF1 reconfiguration path connecting $\cA^-$ and~$\cB^-$.
Add $g$ to each allocation on this path.
In particular, for $\cA^-$ and~$\cB^-$, add $g$ to its owners in $\cA^*$ and $\cB^*$, respectively; the fact that $\cA^*$ and $\cB^*$ are EF1 guarantees that these are respective minimum-utility agents in $\cA^-$ and~$\cB^-$.
For the intermediate allocations, add $g$ to any minimum-utility agent.
Now, any two consecutive allocations on this path can be connected via an EF1 reconfiguration path of length at most~$3$.
Therefore, we can connect $\cA^*$ and $\cB^*$ via a reconfiguration path, completing the induction.
\end{proof}

By contrast, for binary utilities, we obtain a negative result: an EF1 reconfiguration path between two EF1 allocations does not always exist.

\begin{theorem}
\label{thm:transfer-binary}
In the exchange-and-transfer setting, for any $n\ge 5$, there exists an instance with $n$~agents having binary utilities together with initial and target EF1 allocations with the same size vector such that no EF1 reconfiguration path exists.
\end{theorem}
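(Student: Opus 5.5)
The plan is to exhibit an initial EF1 allocation $\cA$ that is \emph{isolated}: every allocation adjacent to it, whether by a transfer or an exchange, violates EF1. We then take any target EF1 allocation $\cB \ne \cA$ with the same size vector, for instance obtained by relabelling goods symmetrically. Since no EF1 path can leave $\cA$, no EF1 reconfiguration path exists. I would build the core instance for $n = 5$ and then pad it to general $n \ge 5$ as in the proofs of \Cref{thm:general-nonexistence} and \Cref{thm:NP-hard}. Each extra agent $i$ gets a private good $h_i$ that only she values, and she also values a few core goods, so that she is \emph{tight}: she holds utility $a$ and values some core bundle at exactly $a+1$. Then she can neither give away $h_i$ nor let that core bundle gain value to her. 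Every legal move must therefore stay inside the core, and the argument reduces to the $5$-agent case.

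For the core, the mechanism I want is that of a tight agent. Call agent $k$ \emph{tight against} $j$ if $u_k(A_j) = u_k(A_k) + 1$ and $A_j$ contains a good that $k$ values. Under binary utilities, $k$ is then still EF1 toward $j$ (drop one valued good). However, any operation that lowers $u_k(A_k)$, or raises $u_k(A_j)$, breaks EF1 for $k$. The design goal is a $5$-agent allocation with the following properties.
\begin{itemize}
\item Every good $g$ is valued $1$ by its owner and by agents that are tight against every other bundle. Then any transfer of $g$, and any exchange moving $g$ into a new bundle, raises some tight agent's value of the receiving bundle or lowers the owner's own value.
\item For exchanges, which are the delicate case, for every pair of goods $g \in A_i$ and $g' \in A_j$, at least one of the following holds.
\begin{itemize}
\item Some agent $k \notin \{i,j\}$ is tight against $j$, values $g$, and does not value $g'$.
\item Some agent $k$ is tight against $i$, values $g'$, and does not value $g$.
\item One of $i, j$ loses value, because she values the good she gives up but not the one she receives.
\end{itemize}
\end{itemize}
I would try a cyclic construction on agents $\{1,\dots,5\}$ with indices mod $5$. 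Each agent holds two goods. Agent $k$ values her own goods and also goods held by agents $k+1$ and $k+2$, with the valuations arranged so that $k$ is tight against both. The cyclic offsets make every ordered pair of bundles guarded by a third agent who distinguishes the two goods being swapped. Five agents appear to be the minimum for this pattern, since each pair $(i,j)$ needs a distinct outside guard in each direction. This is consistent with the bound $n \ge 5$ in the statement.

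The main obstacle is the exchange case. A swap of two goods that are valued identically by every agent changes no utility, so the design must guarantee that no two goods in different bundles have identical valuation profiles. At the same time, the guard conditions must not make $\cA$ itself fail EF1. I would first write down the $5 \times 10$ binary matrix explicitly and verify EF1 for $\cA$ row by row. Next I would exclude all transfers. Finally I would exclude exchanges by a short case analysis on the relative position $j - i \in \{1,2,3,4\}$ of the two bundles, using the cyclic symmetry to handle one representative per case. The target $\cB$ would be the image of $\cA$ under a swap of two goods with different valuation profiles, chosen so that $\cB$ is again EF1. If $\cB$ happens to be adjacent to $\cA$, that swap is itself an exchange, so it must be one the case analysis rules out. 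The fallback is to choose $\cB$ as the image of $\cA$ under a rotation of the goods that preserves the EF1 pattern.
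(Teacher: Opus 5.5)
\textbf{The core design cannot work, and not only for the specific numbers.} Your high-level strategy matches the paper's: an EF1 allocation $\cA$ from which every single transfer or exchange breaks EF1, a target obtained by rotating bundles, and padding with anchor goods. The concrete design you commit to, however, is infeasible, and the obstruction is structural.

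First, infeasibility. If agent $k$ values both of her own two goods, then $u_k(A_k)=2$. Being tight against $A_{k+1}$ would then require $u_k(A_{k+1})=3>|A_{k+1}|$.

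More fundamentally, your first design requirement is that every good is valued by its owner, i.e., $\cA$ is \emph{clean}. Under binary utilities this already rules out isolation.
\begin{itemize}
\item If bundle sizes differ, take $i$ with $|A_i|$ maximum and $j$ with $|A_j|$ minimum, and transfer any $g\in A_i$ to $j$. Agent $i$ now has utility $|A_i|-1$. For $k\ne j$ she has $u_i(A_k)-1\le|A_k|-1\le|A_i|-1$, and $u_i(A_j\cup\{g\})-1\le|A_j|\le|A_i|-1$. Any other agent $k$ has $u_k(A_j\cup\{g\})-1\le|A_j|\le|A_k|=u_k(A_k)$. Agent $j$ only gains. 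So EF1 survives.
\item If all sizes equal $d$, there are two cases. Either $u_i(A_j)\le d-1$ for some $i\ne j$, and the same computation shows that transferring any good from $i$ to $j$preserves EF1. Or every agent values every good, and any exchange leaves all utilities unchanged.
\end{itemize}
So tightness generated by agents who value their own goods can never block all moves.

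\textbf{Missing idea: make $\cA$ deliberately wasteful.} In the paper, each core agent values \emph{neither} of her own goods but exactly one good in each of the four other core bundles. She therefore sits at $0$ versus $1$ against all of them at once. This comes from edge goods $g_{i,j}$ valued exactly by core agents $i$ and $j$, with agent $i$ holding $g_{i-1,i+1}$ and $g_{i-2,i+2}$, which cover each other index exactly once.
\begin{itemize}
\item Any transfer of an edge good raises some endpoint's value for the receiving bundle to $2$ while that endpoint holds $0$.
\item For a core--core exchange, a short combinatorial argument shows that some bundle ends up with two edge goods sharing an index $\ell$ of an uninvolved agent. That agent holds $0$ and now values that bundle at $2$.
\end{itemize}

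\textbf{The padding also needs changing.} With a non-clean core, a tight extra agent cannot by herself block swapping $h_i$ for a core good she values: neither her own utility nor her value for the receiving bundle changes. The paper handles this by making core agents value every anchor good. The core bundle that receives $h_i$ is then worth $2$ to each endpoint of its remaining edge good, while that endpoint holds $0$.

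\textbf{Target choice.} Your first choice of target, a single swap, can never work. If $\cA$ is isolated, no allocation adjacent to $\cA$ is EF1, so $\cB$ must be non-adjacent, for example the rotation $B_i=M_{i+1}$.
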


\begin{proof}
For convenience, we denote the set of agents by $N = \{0,1,\dots,n-1\}$, where $n\ge 5$.
Let $C = \{0,1,2,3,4\}$ be the set of \emph{core agents} and $D = N\setminus C$ be the set of \emph{extra agents}.

For each unordered pair $\{i,j\}\subseteq C$, create an \emph{edge good} $g_{i,j} = g_{j,i}$; there are $10$ edge goods in total.
For each $i\in C$, let $M_i = \{g_{i-1,i+1}, g_{i-2,i+2}\}$, where indices are taken modulo~$5$.
That is, 
\begin{align*}
&M_0 = \{g_{1,4}, g_{2,3}\}, \quad
M_1 = \{g_{0,2}, g_{3,4}\}, \quad
M_2 = \{g_{1,3}, g_{0,4}\}, \quad \\
&M_3 = \{g_{2,4}, g_{0,1}\}, \quad
M_4 = \{g_{0,3}, g_{1,2}\}.
\end{align*}
For each extra agent $i\in D$, create an \emph{anchor good} $h_i$.
Hence, the total number of goods is $10+(n-5) = n+5$.
The binary utilities are defined as follows.
\begin{itemize}
\item Each core agent $i\in C$ values an edge good $g_{j,j'}$ if and only if $i\in \{j,j'\}$.
\item Each core agent values every anchor good.
\item Each extra agent $i\in D$ values every edge good and only her anchor good~$h_i$.
\end{itemize}

The initial allocation~$\cA$ is given by $A_i = M_i$ for $i\in C$, and $A_i = \{h_i\}$ for $i\in D$.
Note that each core agent has utility at most~$1$ for any bundle, while each extra agent has utility~$1$ for her own bundle and at most~$2$ for any other bundle.
Hence, $\cA$ is EF1.
The target allocation~$\cB$ is given by $B_i = M_{i+1}$ for $i\in C$, where indices are taken modulo~$5$, and $B_i = \{h_i\}$ for $i\in D$.
By the same reasoning, $\cB$ is EF1 as well.

We claim that there does not exist an EF1 reconfiguration path between $\cA$ and~$\cB$.
In fact, we prove the stronger statement that no exchange or transfer from~$\cA$ produces another EF1 allocation.

First, consider a transfer from~$\cA$.
Suppose that a core agent transfers an edge good~$g_{i,i'}$ to another core agent~$j'$. 
Let $j\in \{i,i'\}\setminus\{j'\}$, so agent~$j$'s utility for her own bundle remains~$0$.
Since $j\ne j'$, before the transfer, agent~$j$ already has utility~$1$ for the bundle of agent~$j'$; after the transfer, this utility increases to~$2$.
Hence, the resulting allocation violates EF1.
If an edge good~$g_{i,i'}$ is transferred to an extra agent~$j$, then agent~$i$'s utility for her own bundle is~$0$ while her utility for agent~$j$'s bundle becomes~$2$, so EF1 is violated.
Lastly, if an anchor good is transferred, then its owner's utility for her own bundle drops to~$0$, while her utility for any core bundle remains at least~$2$, again violating EF1.

Next, consider an exchange from~$\cA$.
To begin with, suppose that two core agents $i$ and~$i'$ exchange edge goods.
We claim that after the exchange, one of these agents will hold two edge goods with overlapping indices.
Assume for contradiction that this is not the case.
Since agent~$i$ originally holds two goods covering all four indices other than~$i$, she must receive a good with index~$i$ from the exchange.
Also, the indices of the good that remains with agent~$i'$ form a subset of the union between the indices of the good that remains with agent~$i$ and the singleton set of index~$i$---since these two goods differ, the former good must contain index~$i$.
However, this implies that agent~$i'$ originally holds two goods with index~$i$ before the exchange, a contradiction.
Hence, there exists a core agent~$j'$ who holds two edge goods with overlapping indices, say index~$j$, after the exchange with a core agent~$j''$.
Since agent~$j$ originally does not hold any good with index~$j$, we have $j\not\in\{j',j''\}$.
Therefore, agent~$j$ still has utility~$0$ and envies agent~$j'$ by more than one good after the exchange.

Finally, consider an exchange involving an anchor good.
If the exchange is between two anchor goods, then the utility of an involved agent drops to~$0$ while her utility for a core agent's bundle remains~$2$, violating EF1.
Else, the exchange is between an edge good~$g$ and an anchor good~$h$.
After the exchange, the corresponding core agent holds $h$ along with her other edge good---let~$i$ be an index of this edge good.
Agent~$i$ has utility~$2$ for this agent's bundle but utility~$0$ for her own bundle, so EF1 is again violated.

In conclusion, no exchange or transfer from~$\cA$ produces another EF1 allocation, as claimed.
\end{proof}

Despite this negative result, we show next that if we relax the requirement on intermediate allocations to EF2, then a reconfiguration path always exists for binary utilities.
In fact, this holds even if we disallow exchanges---we call this setting the \emph{transfer-only setting}.

\begin{theorem}
\label{thm:transfer-EF2}
In the transfer-only setting, for an instance with any number of agents having binary utilities and any initial and target EF1 allocations, there exists an EF2 reconfiguration path.
\end{theorem}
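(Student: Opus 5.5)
Our plan is to route both allocations through a common ``hub''. We go from the initial allocation~$\cA$ to a canonical allocation~$\cC^*$ and from the target~$\cB$ to the same~$\cC^*$, using only transfers and keeping every intermediate allocation EF2. Since transfers are reversible, concatenating the first path with the reverse of the second finishes the proof.

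The key observation is the following. Suppose an allocation is EF1 and we transfer a single good. Every agent's value for her own bundle drops by at most~$1$, and her value for any other bundle rises by at most~$1$. Moreover, the donor's loss and a recipient's gain affect different pairs. So for a pair $(i,j)$, envy grows by at most one unit, and it does so only if $i$ is the donor and loses a valued good, or $j$ is the recipient and gains a good $i$ values. Either way, the allocation is EF2 right after one transfer from an EF1 allocation.

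The plan is therefore to alternate. Each transfer moves from an EF1 allocation to an EF2 allocation, and the next transfer (or a short sequence) restores EF1. Concretely, we first transfer all goods valued at~$0$ by everyone to a fixed agent; these moves never change any utility comparison, so EF1 is preserved. We then process the remaining goods one at a time, always preserving EF1 at the ``checkpoints'' and EF2 in between.

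To aim at a canonical target, we use a potential argument. For binary utilities, an EF1 allocation exists in which each good is held by an agent who values it. Using the paper's own tools, we target a leximin/MNW-style canonical allocation $\cC^*$ in which every good goes to an agent valuing it. For the transitions, we use augmenting-path arguments typical of binary valuations. If agent~$i$ holds a good~$g$ she values at~$0$ but some agent values it, we transfer $g$ to a minimum-utility agent among those valuing it; this is an EF1-preserving envy-cycle-style step. If the allocation is clean (every good valued by its holder) but differs from~$\cC^*$, we find an alternating path $i_0\to i_1\to\dots\to i_s$. Along it, $i_{t}$ holds a good valued by $i_{t+1}$, and utilities along the path decrease, as in the standard MNW augmenting-path argument for binary utilities. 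We then perform the transfers along the path one by one, starting from the end. Each intermediate allocation is one transfer away from an EF1 allocation, hence EF2. After the whole path is executed, the Nash welfare (or leximin vector) has strictly improved or a tie-breaking potential has decreased, so the process terminates at~$\cC^*$. The uniqueness of $\cC^*$ is up to a fixed tie-breaking, and applying the same procedure from~$\cB$ reaches the same point.

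The main obstacle is the middle of an alternating path. After several transfers along the path, the intermediate allocation differs from the last EF1 checkpoint by more than one transfer. Executing the path from its end only guarantees that each intermediate agent loses one good and gains one good net, except at the two endpoints. I would first try to show that in a partially executed path only the agent $i_t$ currently ``short'' a good has lost utility, and only by~$1$, while every other agent's bundle is within one good of her checkpoint bundle in the valued sense. This gives EF2. If this fails, I would fall back on splitting long paths into length-one steps by choosing shortest augmenting paths, where the intermediate bundles are controlled by the minimality of the path.
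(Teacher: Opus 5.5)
There are two genuine gaps: the EF2 argument is built on a false lemma, and the ``hub'' $\cC^*$ is never actually reached from both sides.

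\textbf{The key observation is false.} The donor's loss and the recipient's gain do affect a common pair, namely (donor, recipient). If the donor values the transferred good, her envy toward the recipient grows by two units. For example, if two agents both value $a,b,c$, then $(\{a\},\{b,c\})$ is clean and EF1. After agent~$1$ transfers $a$ to agent~$2$, she has utility $0$ and values agent~$2$'s bundle at $3$, so EF2 fails. In a clean allocation every donor values what she gives away, so this two-unit jump occurs at every transfer along an augmenting path. Whether it breaks EF2 depends on bundle sizes, which your argument never controls. Hence ``one transfer away from an EF1 allocation, hence EF2'' is unavailable. Your partial-path estimate (the short agent is down by $1$, every other bundle changed by at most one valued good) only gives EF3 for the short agent.

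What is missing is a size condition on whoever is temporarily below her original utility. The paper truncates each improving path at a maximum-utility agent and starts the transfers at that losing endpoint, so she is the only agent ever below her original utility $d$. Every bundle on the path, even after gaining a good, then has size at most $d+1$, and EF1 bounds her value for untouched bundles by $d+1$. Against her reduced utility $d-1$, this is EF2. Your unproved claim that utilities decrease along the path would supply such a condition. It does hold for a shortest improving path, whose interior agents all have utility exactly one below the losing endpoint, but you would have to prove this and use it in a direct EF2 argument.

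Moreover, after a path is executed the allocation need not be EF1: the loser may now be two units below an untouched bundle she values. So your EF1 ``checkpoints'' are not maintained. The paper handles such clean EF2 states with a separate single-transfer rule: the donor has largest utility among possible donors, and the recipient has smallest utility among agents valuing the good.

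Your cleaning step is not EF1-preserving either. Suppose agent~$1$ values $x,h_1,h_2,g$, agent~$2$ values only $g$, and agent~$3$ values nothing. Then $(\{x\},\{h_1,h_2\},\{g\})$ is EF1. Moving $g$ to agent~$2$, the minimum-utility agent valuing it, leaves agent~$1$ with utility $1$ against a bundle she values at $3$. The paper instead transfers, as a batch, a maximum matching of misallocated goods to agents valuing them that minimizes the total utility of the matched agents.

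\textbf{The hub is not reached.} Clean potential minimizers, and likewise clean MNW or leximin allocations, are generally not unique. With two goods $a,b$ valued by both of two agents, both $(\{a\},\{b\})$ and $(\{b\},\{a\})$ are minimizers. A procedure making strictly improving moves stops at whichever minimizer it first hits, so the runs from $\cA$ and from $\cB$ will typically end at different allocations. ``A tie-breaking potential has decreased'' presupposes EF2-preserving moves between minimizers, which is precisely the hard part, and you do not supply them.

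The paper provides this step as follows:
\begin{enumerate}
\item It decomposes the difference graph of two clean minimizers into paths and cycles.
\item It uses the expansion of $\Phi(\cB)-\Phi(\cA)$ into nonnegative terms to show that every path satisfies $d_{i_t}=d_{i_0}+1$, with the losing endpoint of maximum utility along it.
\item It executes each path, and each cycle starting at a maximum-utility agent, with EF2 intermediates, landing at another minimizer with more goods at their target owners.
\end{enumerate}
With that step in hand no canonical hub is needed; without it, your two half-paths are not joined.
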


\begin{proof}
If there are goods that yield utility~$0$ to every agent, we can transfer them to their target owners without changing any utilities.
Hence, we may ignore such goods and assume from now on that every good is valued by at least one agent.
Call an allocation~$\cA$ \emph{clean} if $u_i(A_i) = |A_i|$ for all $i\in N$, that is, every good is assigned to an agent who values it.\footnote{This property is often considered in work on binary or matroid-rank utilities, and is sometimes called \emph{non-redundant} or \emph{non-wasteful} \citep{BabaioffEzFe21,BenabbouChIg21,BarmanVe22,SuksompongTe23,MontanariScSu25}.} 
The proof first connects each of the initial and target allocations to a clean EF1 allocation, then connects the resulting clean allocations through allocations minimizing a common potential.

First, we show that any EF1 allocation can be transformed into a clean EF1 allocation via transfers in such a way that every intermediate allocation is EF2.
Let $\cA$ be the current EF1 allocation, and let $d_i = u_i(A_i)$ for each $i\in N$.
A good is called \emph{misallocated} if it yields utility~$0$ to the current owner; let $D$ be the set of misallocated goods.
Suppose that $D\neq \emptyset$.
We create a bipartite graph between $D$ and the set of agents~$N$ where there is an edge between a good and each agent who values it.
Note that each good has an edge to at least one agent but not to its current owner.
Choose a matching of maximum size, and among such matchings, choose one that minimizes the sum of $d_i$ across all matched agents~$i$.
Transfer each matched good to its matched agent in arbitrary order, and let $\cB$ be the resulting allocation after these transfers.

We claim that every intermediate allocation is EF2, and $\cB$ is EF1.
During the transfer process, no agent loses utility since all transferred goods were originally misallocated.
Also, each agent receives at most one additional good.
Since the original allocation~$\cA$ is EF1, every intermediate allocation is EF2.
We now show that $\cB$ is EF1.
For each agent who receives a good, her utility for her own bundle increases by~$1$ and her utility for any other agent's bundle increases by at most~$1$, so she remains EF1.
Consider an unmatched agent~$i$; her utility remains~$d_i$.
The only possible violation concerns a matched agent~$j$ whose incoming good~$g$ is valued by~$i$.
In this case, every good in~$D$ valued by~$i$ must already be matched, since otherwise we can match such a good with~$i$ and increase the size of the matching.
Hence, none of the goods from~$D$ remaining in $j$'s bundle (if any) is valued by~$i$.
Note that there are exactly $d_j$~goods originally held by~$j$ and valued by~$j$.
Also, we must have $d_j\le d_i$, since otherwise we can match~$g$ with~$i$ instead and decrease the sum objective of the matching.
Therefore, after the transfer process, $i$'s utility for $j$'s bundle is at most $d_j + 1 \le d_i + 1 = u_i(B_i) + 1$, and so EF1 is satisfied for agent~$i$.
By repeating this matching and transfer process as long as there are misallocated goods, we arrive at a clean EF1 allocation.

Next, for a clean allocation~$\cA$, we define its \emph{transfer graph} as a directed graph with the set of agents~$N$ as the set of vertices, and there is an edge from an agent~$i$ to another agent~$j$ if $u_i(A_j) > 0$.
Consider any simple directed path $i_0\rightarrow i_1\rightarrow\dots\rightarrow i_t$ for some positive integer~$t$.
For each $r\in\{1,\dots,t\}$, let $g_r$ be a good in~$A_{i_r}$ such that $u_{i_{r-1}}(g_r) = 1$.
We transfer the goods in backward order along the path, starting with $g_t$ from $i_t$ to $i_{t-1}$.
Since each agent values a good that she receives, every agent has the same utility as before, except that agent~$i_t$'s utility decreases by~$1$ and agent~$i_0$'s utility increases by~$1$.
We claim that if $\cA$ is EF1 and agent~$i_t$ has the maximum bundle size on the path, then every intermediate allocation is EF2.
Indeed, other than agent~$i_t$, no agent's utility decreases during the process (relative to her original utility), so since each bundle receives at most one additional good, every such agent remains EF2.
For agent~$i_t$, her utility decreases from~$d_{i_t}$ to $d_{i_t} - 1$.
Since $\cA$ is EF1, her utility for any bundle outside the path remains at most $d_{i_t} + 1$, and a bundle on the path has size at most $d_{i_t} + 1$ due to the assumption on agent~$i_t$, so agent~$i_t$ also remains EF2.
By similar reasoning, for a simple directed cycle $i_0\rightarrow i_1\rightarrow\dots\rightarrow i_t\rightarrow i_0$, if we start the transfer process by transferring a good from an agent with the largest bundle, every intermediate allocation satisfies EF2.

For a clean allocation~$\cA$, we define the potential\footnote{This potential function was also used by \citet{BenabbouChIg21}.} $\Phi(\cA) = \sum_{i\in N} d_i^2$.
A directed path $i_0\rightarrow\dots\rightarrow i_t$ is called \emph{improving} if $d_{i_t} \ge d_{i_0} + 2$.
Observe that implementing transfers along an improving path changes the potential by
\[
(d_{i_t} - 1)^2 + (d_{i_0} + 1)^2 - d_{i_t}^2 - d_{i_0}^2 
= -2(d_{i_t} - d_{i_0} - 1) < 0,
\]
that is, it decreases the potential.
Hence, a clean allocation that minimizes the potential among all clean allocations cannot contain an improving path.
We claim that the converse also holds: if a clean allocation does not contain any improving path, then it minimizes the potential (among all clean allocations).
Let $\cA$ be a clean allocation without an improving path, and let $\cB$ be any clean allocation.
For each $i\in N$, let $e_i = |B_i|$ and $\delta_i = e_i - d_i$.
Construct a \emph{difference (multi)graph} where the set of vertices corresponds to the set of agents~$N$, and for each good whose owners differ in $\cA$ and $\cB$, we add an edge from its owner in~$\cA$ to its owner in~$\cB$.
Decompose this graph into simple directed paths from surplus agents (i.e., those for whom $d_i > e_i$) to deficit agents (i.e., those for whom $d_i < e_i$), along with simple directed cycles.
Let $\cP$ be the collection of paths, counted with multiplicity.
Each surplus agent~$i$ starts $d_i-e_i$ paths, while each deficit agent~$i$ ends $e_i-d_i$ paths.

For a path $P\in \cP$, let $a(P)$ and $b(P)$ denote its starting and ending agents, respectively.
Then, $\sum_{i\in N}|\delta_i| = 2|\cP|$ and 
\[
\sum_{i\in N} d_i\delta_i 
= \sum_{i\in N} d_i(\text{indeg}(i) - \text{outdeg}(i))
= \sum_{P\in \cP} (d_{b(P)} - d_{a(P)}),
\]
where the indegrees and outdegrees are with respect to the difference graph.
Hence,
\begin{align}
\Phi(\cB) - \Phi(\cA)
&= \sum_{i\in N} e_i^2 - \sum_{i\in N} d_i^2 \nonumber \\
&= \sum_{i\in N} (d_i + \delta_i)^2 - \sum_{i\in N} d_i^2 \nonumber \\
&= \sum_{i\in N} 2d_i\delta_i + \sum_{i\in N} \delta_i^2 \nonumber \\
&= 2\sum_{P\in \cP} (d_{b(P)} - d_{a(P)}) + \sum_{i\in N} \delta_i^2 \nonumber \\
&= 2\sum_{P\in \cP} (d_{b(P)} - d_{a(P)} + 1) + \sum_{i\in N} (\delta_i^2 - |\delta_i|). \label{eq:potential-difference}
\end{align}
Every term in the last expression is nonnegative.
Indeed, $\delta_i^2 - |\delta_i| \ge 0$ holds because $\delta_i$ is an integer.
If $d_{b(P)} - d_{a(P)} + 1 < 0$ for some path~$P\in\cP$, then $d_{a(P)} - d_{b(P)} \ge 2$; since $\cA$ and $\cB$ are clean allocations, this would give rise to an improving path in the transfer graph (of~$\cA$) from $b(P)$ to~$a(P)$, a contradiction.
It follows that $\Phi(\cA) \le \Phi(\cB)$, so $\cA$ indeed minimizes the potential among all clean allocations.
Moreover, $\cA$ must also be EF1.
To see this, note that if $u_i(A_j) > 0$ for some distinct agents $i$ and $j$, then there is an edge $i\rightarrow j$ in the transfer graph.
The absence of a one-edge improving path implies that $u_i(A_j) \le |A_j| = d_j \le d_i + 1 = u_i(A_i) + 1$.

Next, we show that any clean EF2 allocation~$\cA$ can be connected to a clean allocation minimizing~$\Phi$ via transfers through clean EF2 allocations.
Assume that the current clean EF2 allocation~$\cA$ does not minimize $\Phi$.
Suppose first that some agent holds a good valued by an agent whose utility is lower by at least~$2$.
Among all agents who can give away a good in such a transfer, choose an agent~$i^*$ with the largest~$d_{i^*}$, and choose a good~$g$ that allows such a transfer.
Among all agents who can receive~$g$, choose an agent~$j^*$ with the minimum utility; in particular, $d_{i^*} \ge d_{j^*} + 2$.
Let $\cB$ be the allocation obtained by transferring $g$ from~$i^*$ to~$j^*$.
For any agent $i\ne i^*$, her own utility does not decrease, and the only bundle for which her utility can increase is that of agent~$j^*$, when agent~$i$ values~$g$.
If agent~$i$ values~$g$, then the choice of $j^*$ implies that $d_i \ge d_{j^*}$, and so $u_i(B_{j^*}) \le |B_{j^*}| = d_{j^*} + 1 \le d_i+1$, so $i$ remains EF2.
For agent~$i^*$, before the transfer, we have $u_{i^*}(A_j) \le d_{i^*} + 1$ for every $j\in N$---otherwise, for an agent~$j$ where this is violated, we would have $d_j = |A_j| \ge u_{i^*}(A_j) \ge d_{i^*}+2$, so agent~$j$ can transfer a good to agent~$i^*$, contradicting the maximality of $d_{i^*}$.
After the transfer from~$\cA$ to~$\cB$, the utility of agent~$i^*$ is $d_{i^*} - 1$.
For each unchanged bundle, the utility of agent~$i^*$ is at most $d_{i^*} + 1$, and the enlarged bundle of agent~$j^*$ has size $d_{j^*}+1 \le d_{i^*} - 1$.
Hence, $\cB$ is a clean EF2 allocation, and the transfer strictly decreases $\Phi$ since $d_{i^*} \ge d_{j^*} + 2$.

Suppose now that in~$\cA$, no agent holds a good valued by an agent whose utility is lower by at least~$2$.
Hence, for each edge $i\rightarrow j$ in the transfer graph, we have $d_j \le d_i + 1$, and so $\cA$ is EF1.
Since $\cA$ is not a potential minimizer, the transfer graph contains an improving path $i_0\rightarrow i_1\rightarrow\dots\rightarrow i_t$, where $d_{i_t} \ge d_{i_0} + 2$.
Let $i_r$ be a maximum-utility agent on this path, and retain the prefix $i_0\rightarrow i_1\rightarrow\dots\rightarrow i_r$; this prefix is still an improving path since $d_{i_r} \ge d_{i_t} \ge d_{i_0}+2$.
Since $\cA$ is EF1, as argued earlier, we can transfer goods backward along this path so that every intermediate allocation is EF2.
Again, this transfer process strictly decreases~$\Phi$.
By repeating such transfer processes, we eventually reach a clean EF2 allocation minimizing~$\Phi$.

It remains to connect any two clean allocations minimizing~$\Phi$ via EF2 allocations.
As shown earlier, clean allocations minimizing~$\Phi$ must be EF1.\footnote{This is also Corollary~3.8 in the work of \citet{BenabbouChIg21}.}
Let $\cA$ be the initial clean allocation and $\cB$ the target clean allocation.
Define the difference graph between $\cA$ and $\cB$ as before, and decompose it into simple directed paths from surplus agents to deficit agents, along with simple directed cycles.
Since both $\cA$ and $\cB$ minimize~$\Phi$, and all terms on the right-hand side of~\eqref{eq:potential-difference} are nonnegative, all of these terms must be $0$.
Hence, for any path $i_t\rightarrow i_{t-1}\rightarrow \dots \rightarrow i_0$ in the chosen decomposition, it holds that $d_{i_t} = d_{i_0}+1$.
If $d_{i_r} > d_{i_t}$ for some $r\in\{1,\dots,t-1\}$, we would have $d_{i_r} \ge d_{i_0} + 2$, and $i_0\rightarrow \dots \rightarrow i_r$ would be an improving path in the transfer graph of~$\cA$, a contradiction.
Hence $d_{i_t} \ge d_{i_r}$ for all $r\in\{0,1,\dots,t\}$.
Therefore, as argued earlier, we can transfer goods backward along the path $i_0\rightarrow\dots\rightarrow i_t$ in the transfer graph so that every intermediate allocation is EF2.
After the transfers, agent~$i_t$'s utility decreases by~$1$, agent~$i_0$'s utility increases by~$1$, and every other agent's utility remains the same.
Since $d_{i_t} = d_{i_0}+1$, the resulting allocation also minimizes the potential~$\Phi$ among all clean allocations and is therefore EF1.
Moreover, each transferred good is with its target owner.
In the case of a simple cycle, we can execute the transfers starting with a maximum-utility agent, and EF2 is maintained throughout.
Since no agent's utility changes, the resulting allocation again minimizes~$\Phi$ and is EF1.
It follows that we can reach $\cB$ from $\cA$ via an EF2 reconfiguration path, as desired.
\end{proof}

Finally, in a similar vein as \Cref{thm:recursively-balanced-divisible}, we show that if the initial and target allocations are recursively balanced, then there exists a reconfiguration path with the same property.
In fact, with transfers allowed, we do not need the condition that the number of goods is divisible by the number of agents.

\begin{theorem}
\label{thm:transfer-recursively-balanced}
In the exchange-and-transfer setting, for any instance and any initial and target recursively balanced allocations, there exists a reconfiguration path such that every intermediate allocation is recursively balanced (and therefore EF1). 
\end{theorem}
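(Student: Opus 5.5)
Our plan is to reduce to \Cref{thm:recursively-balanced-divisible} by padding the instance with dummy goods, so that transfers can simulate moving a dummy good. Let $m = qn + s$ with $0 \le s < n$. If $s = 0$, the size vector of every recursively balanced allocation is $(q,\dots,q)$, and \Cref{thm:recursively-balanced-divisible} already gives a path consisting of exchanges only. Otherwise, we add $n - s$ dummy goods $d_1,\dots,d_{n-s}$, valued at $0$ by every agent, so the enlarged instance has $(q+1)n$ goods. We fix the tie-breaking so that, among equally valued goods, real goods are preferred to dummy goods. Then, in any recursively balanced picking sequence on the enlarged instance, the dummy goods are picked exactly in the last $n-s$ picks of the final (complete) round. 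Removing these goods yields the allocation produced by the truncated sequence, which is a recursively balanced sequence for $M$. Conversely, any recursively balanced sequence for $M$ can be extended by appending the $n-s$ agents missing from its final round; the dummies then go to exactly these agents. Hence the recursively balanced allocations of $M$ are exactly the restrictions to $M$ of recursively balanced allocations of the enlarged instance in which each agent receives at most one dummy good and the dummies are held by the agents outside the final round.

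Given initial and target allocations $\cA,\cB$ of $M$, we extend them to $\cA^+,\cB^+$ as above and apply \Cref{thm:recursively-balanced-divisible} (or rather the claim in its proof, with tie-breaking allowed to be arbitrary among zero-valued goods) to obtain an exchange path $\cA^+ = \cC^0,\dots,\cC^T = \cB^+$ through recursively balanced allocations of the enlarged instance. We then project each $\cC^t$ to $M$ by deleting the dummies. Consecutive projections are related as follows. An exchange of two real goods becomes an exchange. An exchange of a real good $g$ with a dummy becomes a transfer of $g$. An exchange of two dummies becomes the identity. So the projected sequence, after deleting repetitions, is a path in the exchange-and-transfer setting.

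The main obstacle is to ensure that every projected allocation is recursively balanced for $M$. A subtlety is that the allocations in $\Gamma$ from \Cref{thm:recursively-balanced-divisible} allow arbitrary tie-breaking, so a dummy might be picked before a zero-valued real good, or an agent might get two dummies. We would first try to show that the projection is still recursively balanced regardless. Take the witnessing sequence for $\cC^t$ and delete the picks that took dummies. Each such pick took a good of value $0$ for the picker; replacing it by a pick of some zero-valued real good (or deleting it) does not change any agent's utility profile. We then argue via a swapping argument, moving dummy picks to the end of the sequence, that the restricted allocation is produced by a recursively balanced sequence on $M$, since postponing a zero-value pick never hurts later pickers' choices. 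If this swapping argument becomes messy, the fallback is to redo the induction of \Cref{thm:recursively-balanced-divisible} directly with the constraint that dummies are picked last. The constructions in that proof (the synchronized pairs of runs) only swap the roles of two agents and appear to respect this constraint, because the dummies are interchangeable.
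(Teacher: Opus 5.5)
\textbf{Gap: dummies valued at $0$ break the projection step.} Your overall plan matches the paper's: pad with $n-s$ dummies, invoke \Cref{thm:recursively-balanced-divisible}, turn real--dummy exchanges into transfers, and drop dummy--dummy exchanges. But the step you flag as the main obstacle is not closed, and your primary way of closing it cannot work.

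With dummies valued at $0$, the set $\Gamma$ from \Cref{thm:recursively-balanced-divisible} contains allocations whose projection to $M$ is not recursively balanced. So no swapping argument can show the projection is recursively balanced ``regardless''. For instance, take $n=3$, $m=4$ (two dummies $d_1,d_2$), and let agent~$1$ value every real good at $0$. Under $(1,2,3\mid 1,2,3)$, agent~$1$ may legitimately pick $d_1$ and then $d_2$, so agents $2$ and $3$ get two real goods each. The projection then has size vector $(0,2,2)$. Every recursively balanced allocation of four goods among three agents has a permutation of $(2,1,1)$. The path from \Cref{thm:recursively-balanced-divisible} is only guaranteed to stay inside $\Gamma$, so you cannot exclude such allocations. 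Moreover, replacing a dummy pick by a pick of a zero-valued real good preserves utilities but changes the allocation. It therefore says nothing about the specific projected allocation you must certify.

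\textbf{The missing idea.} Your fallback (redo the induction with ``dummies picked last'') is in effect correct, but you leave it at ``appear to respect''. The paper avoids redoing anything: make every real good strictly better than every dummy. Set $\widehat{u}_i(g)=u_i(g)+1$ for real goods and $0$ for dummies. This preserves each agent's ordering of the real goods, including ties. Now, in every run of every recursively balanced sequence, whatever the tie-breaking, the dummies are exactly the last $n-s$ picks of the final round. Hence each agent gets at most one dummy. Deleting the dummies leaves a run, in the original instance, of the truncated sequence ($q$ full rounds plus $s$ distinct agents), which is recursively balanced. \Cref{thm:recursively-balanced-divisible} then applies verbatim to the auxiliary instance, and the rest of your argument (extending $\cA,\cB$ and projecting exchanges to transfers) goes through as you wrote it.
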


\begin{proof}
Consider any instance and any initial and target recursively balanced allocations, $\cA$ and $\cB$.
If $m$ is divisible by~$n$, the result already follows from \Cref{thm:recursively-balanced-divisible} with exchanges alone.
Assume from now on that $m$ is not divisible by~$n$, and write $m = qn + r$ where $r\in\{1,\dots,n-1\}$.
Construct an auxiliary instance with the $m$ real goods together with $n-r$ dummy goods, so that the total number of goods is $m+(n-r) = (q+1)n$.
For a real good~$g$ and any agent~$i$, the auxiliary utility is $\widehat{u}_i(g) = u_i(g) + 1$.
All agents have utility~$0$ for all dummy goods.

Consider the recursively balanced picking sequences (and tie-breaking choices, if any) that yield $\cA$ and $\cB$ in the original instance.
If we run the same picking sequences on the auxiliary instance, the first $m$~picks remain the same.
Let each of the $n-r$ agents who have not picked in the last round take an arbitrary dummy good (this set of agents may differ between $\cA$ and $\cB$), and call the respective resulting allocations $\widehat{\cA}$ and $\widehat{\cB}$.
\Cref{thm:recursively-balanced-divisible} yields an exchange-only reconfiguration path between $\widehat{\cA}$ and $\widehat{\cB}$ consisting only of recursively balanced allocations in the auxiliary instance.
Since agents cannot pick dummy goods ahead of real goods, for each of these allocations, the dummy goods are picked in the last $n-r$ picks and each agent can have at most one dummy good.

Now, remove the dummy goods from each allocation on the reconfiguration path.
Since the order of utilities for the real goods (including ties) is the same in the original and auxiliary instances, each resulting allocation can be produced by a recursively balanced picking sequence in the original instance.
Moreover, an exchange between two dummy goods can be ignored, while an exchange between a dummy good and a real good can be turned into a transfer.
Hence, this gives rise to an exchange-and-transfer reconfiguration path for the original instance, as desired.
\end{proof}

\section*{Acknowledgments}

This work is partially supported by the Deutsche Forschungsgemeinschaft under grant BR 2312/14-1, by the Ministry of Education, Singapore, under the Academic Research Fund Tier 1 (FY 2026) grant number 251RES2604, and by an NUS Start-up Grant. 
ChatGPT 5.6 Pro was used to help devise proofs with the authors' guidance.
The authors wrote the entire paper and take full responsibility for its content.

\bibliographystyle{plainnat} 
\bibliography{main}

\end{document}